\title{Tutte Embeddings of Tetrahedral Meshes
}
\author{Marc Alexa}
\thanks{Faculty of Computer Science \& Electrical Engineering, TU Berlin, \texttt{marc.alexa@tu-berlin.de}}
\subjclass[2010]{Primary 05C10; Secondary 05C85}
\keywords{tetrahedral mesh, Tutte embedding, convex combination, linkless embedding, harmonic map}
\address{TU Berlin, Sekretariat MAR 6-6, Marchstr. 6, 10587 Berlin, Germany}
\email{marc.alexa@tu-berlin.de}
\def\R{\mathbb{R}}
\def\tp{^{\mathsf{T}}}
\newcommand{\mv}[1]{\mathbf{#1}}
\newtheorem{theorem}{Theorem}
\newtheorem{lemma}{Lemma}[section]
\newtheorem{proposition}[lemma]{Proposition}
\newtheorem{corollary}[lemma]{Corollary}
\begin{document}

\maketitle

\begin{abstract}
Tutte's embedding theorem states that every 3-connected graph without a $K_5$ or $K_{3,3}$ minor (i.e.\ a \emph{planar} graph) is embedded in the plane if the outer face is in convex position and the interior vertices are convex combinations of their neighbors. We show that this result extends to simply connected tetrahedral meshes in a natural way: for the tetrahedral mesh to be embedded if the outer polyhedron is in convex position and the interior vertices are convex combination of their neighbors it is sufficient (but not necessary) that the graph of the tetrahedral mesh contains no $K_6$ and no $K_{3,3,1}$, and all triangles incident on three boundary vertices are boundary triangles.
\end{abstract}

\section{Introduction}

Every planar graph has a straight line embedding in the plane~\cite{Fary}. Tutte's celebrated embedding theorem~\cite{Tutte} provides a constructive proof for 3-connected planar graphs: if the outer polygon of the graph is in convex position and the interior vertices are convex combinations of their neighbors then the realization is an embedding (and every face is convex). This means, we can compute an embedding by placing the vertices of the boundary polygon so that it is convex and then solving a linear system for the positions of the interior vertices. This procedure has become an invaluable tool in computer graphics, geometry processing, and CAGD, where it is used to construct mappings for triangulations~\cite{Floater:2003:PLM}. It has been generalized to other surface geometries and topologies~\cite{GORTLER200683,Aigerman:2015:OTE,Aigerman:2016:HOT}. 

A particularly useful generalization of Tutte embeddings would be the three-dimensional case and, in particular, tetrahedral meshes. Embedding a given tetrahedral mesh with fixed convex boundary would enable bijective piecewise linear mappings between different domains represented by the same boundary triangulation. Alas, Tutte's result fails to generalize to 3D in a simple way for arbitrary polyhedral complexes~\cite{chilakamarri1995three}. It has been observed to fail even for well behaved tetrahedral meshes in practice~\cite[Fig.~2]{Campen}. We have conducted simple experiments on small tetrahedral meshes that suggests the convex combination approach will fail 
 for ``most'' randomly chosen convex combination weights. Note that in most practical case one starts with an embedded tetrahedral mesh. If the mesh is already realized with convex boundary, there clearly exists a convex combination map that would generate the realization (just express every interior vertex as the convex combination of its neighbors). Our experiments suggest that the chance of finding such weights by random sampling is small. 

The smallest possible counterexample is composed of two vertices inside a tetrahedron, with the two vertices connected to each other and all 4 boundary vertices~\cite{Floater:2006:CCM}. The graph of this tetrahedral mesh is the complete graph on 6 vertices  $K_6$. In view of the fact that Tutte's original work showed that 3-connected graphs may be embedded in the plane if (and only if) they have no $K_5$ and $K_{3,3}$ as a minor, one might ask if the observation that a $K_6$ is a counterexample for the tetrahedral case has any meaning. Note that $K_6$ is a forbidden minor in the class of \emph{linklessly embeddable} graphs, which have been termed the natural three-dimensional analogue of planar graphs~\cite{Sachs}.
An important difference to the planer case, of course, is that a tetrahedral mesh with vertices and edges forming a $K_6$ \emph{can} be embedded, meaning that there exist some convex combination weights that lead to an embedding (unlike in the case of non-planar graphs). We may still ask if tetrahedral meshes that are linklessly embeddable (i.e.\ have no $K_6$ and $K_{3,3,1}$ as a minor, see Section~\ref{sec:basics}) guarantee Tutte embeddings. Indeed, this turns out to be the case. 

Apart from excluded minors, however, there is one more obstruction. It has long been observed in the continuous domain that the harmonic extension from a boundary homeomorphism fails to be injective in 3D~\cite{Melas}. In fact, arbitrarily small perturbations of the identity map on the boundary are sufficient to cause the loss of injectivity~\cite{Laugesen}. For Tutte's embedding this has been elucidated by Floater\footnote{The quote is from private email exchange on the topic of tetrahedral Tutte embeddings} as follows:
\begin{quote}
[...] there are very many ways
to map, 1-1, one convex boundary into another
(imagine creating lots of twists and turns).
Won't that create foldover in the 3D embedding inside?
So I guess the boundary mapping needs to be restricted.
\end{quote}
We find that this problem materializes if an interior triangle in the tetrahedral mesh is fixed on the boundary, but the remaining vertices of its incident tetrahedra are not. Then "pulling" boundary vertices allows moving the interior vertices on either side of the fixed triangle. Interestingly, some proofs of Tutte's theorem require no chords in the graph. In 2D, this restriction can be lifted by observing that chords simply divide the outer polygon into two smaller polygons. In 3D, however, a triangle on boundary vertices is generally not dividing the mesh into two components. 

As we will show, the two restrictions mentioned above are sufficient to prove a version of Tutte's embedding theorem for tetrahedral meshes. The result is as follows:
\begin{theorem}
Given a tetrahedral mesh with the following properties:
\begin{itemize}
    \item The boundary is simply connected. 
    \item A triangle incident on three boundary vertices is on the boundary.
    \item The graph is 4-connected.
    \item The graph has no $K_6$ or $K_{3,3,1}$ as a minor.
\end{itemize}
If the vertex positions of the mesh are realized in $\R^3$ so that (1) the boundary triangles form a strictly convex polyhedron and (2) each interior vertex is a strictly convex combination of its neighbors then the mesh is embedded. 
\end{theorem}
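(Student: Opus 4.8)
Write $f\colon M\to\R^3$ for the piecewise–linear map given by the realization and $V$ for the vertex set of $M$. The plan is to prove that $f$ is a homeomorphism onto the convex polyhedron $Q=\operatorname{conv}(f(V))$, following the classical three–step pattern: a maximum principle, a reduction of global injectivity to local injectivity, and a combinatorial local argument that is where the hypotheses are spent. First I would record the \emph{maximum principle}: for every affine functional $\ell\colon\R^3\to\R$ the function $\ell\circ f$ attains no strict local extremum at an interior vertex, since such a vertex is a strict convex combination of its neighbors. With strict convexity of the boundary polyhedron and connectivity of the graph this already gives $f(\operatorname{int}M)\subseteq\operatorname{int}Q$ and, more usefully, the \emph{connectivity lemma}: for every open half-space $H\subset\R^3$ the vertices in $f^{-1}(H)$ induce a connected subgraph of $M$ whenever this set is nonempty. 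The proof is the $3$D analogue of the planar one — each component of $f^{-1}(H)$ must contain a boundary vertex (otherwise its $\ell$-maximal vertex violates the convex-combination property, for generic $\ell$), while $\partial M\cap H$ is a topological disk whose vertices induce a connected subgraph, so all boundary vertices in $H$, and hence all components, coincide.

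Next I would reduce to a statement about single vertices. Since the boundary is simply connected it is a $2$-sphere and $M$ is a $3$-ball, and strict convexity makes $f|_{\partial M}$ a homeomorphism onto $\partial Q$; a standard covering-space argument then shows $f$ is a homeomorphism onto $Q$ provided it is a local homeomorphism on $\operatorname{int}M$ (the restriction $f|_{\operatorname{int}M}\to\operatorname{int}Q$ is proper with simply connected target, hence a $1$-sheeted covering). Because $f$ is simplicial, local injectivity in the interior amounts to the condition that the closed star of every interior vertex be embedded — the hypothesis that a triangle on three boundary vertices is a boundary triangle is what eliminates tetrahedra all of whose vertices lie on $\partial M$ (such a tetrahedron has four boundary faces and would be all of $M$, a trivial case). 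Finally, the closed star of an interior vertex $v$ is embedded exactly when radial projection from $f(v)$ carries the link $2$-sphere $L_v$ homeomorphically onto $S^2$, and since this is a simplicial map of a closed surface it suffices that it be a local homeomorphism. Unwinding this, it suffices to show: for every plane $P$ through $f(v)$ the neighbors $N(v)$ strictly above $P$ form a nonempty connected subgraph \emph{of $L_v$}, and likewise below — nonemptiness being automatic because $f(v)$ lies in the relative interior of the convex hull of its neighbors.

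The obstacle, which I expect to be the heart of the proof, is that the connectivity lemma gives connectivity only inside the whole graph $M$, and one must upgrade it to connectivity inside the link $L_v$; this is exactly where $4$-connectivity and the exclusion of $K_6$ and $K_{3,3,1}$ are needed. Suppose the set $S$ of neighbors above $P$ breaks into components $S_1,\dots,S_k$ of $L_v$ with $k\ge 2$, and let $T=N(v)\setminus S$ be the neighbors below $P$. The connectivity lemma makes $S$ connected in $M$ and $T$ connected in $M$, so there are paths joining the $S_i$ that must leave the star of $v$ — running through vertices above $P$ outside $\overline{\operatorname{star}(v)}$ — and likewise for $T$. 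The plan is to use the spherical planarity of $L_v$ to see how $S_1$, $S_2$, $T$ and these external routes are forced to interlock around the apex $v$, and, after contracting the connected pieces, to exhibit either a $K_{3,3,1}$ minor (three branches, three connecting routes, apex $v$) or a $K_6$ minor, contradicting the hypothesis; $4$-connectivity is what prevents a small vertex separator from letting the routes be rerouted so as to dodge the interlocking (and what keeps the pieces produced by the connectivity lemma from being too thin to contract). Carrying out this extraction — bookkeeping which vertices lie inside versus outside $\operatorname{star}(v)$ and where they sit on $L_v$ — is the real work; granting it, $f$ is a local homeomorphism on $\operatorname{int}M$ and the reduction above completes the proof.
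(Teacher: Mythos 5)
Your architecture is sound and runs parallel to the paper's up to the point where the actual work begins: the maximum principle, the half-space connectivity lemma (the paper's Corollary~\ref{cor:connected}, proved there via non-decreasing paths to the extreme boundary vertices), and the reduction of global injectivity to a local condition (you use a covering-space argument over $\operatorname{int}Q$; the paper counts tetrahedra containing a point moving along a generic ray --- these are interchangeable). But the proposal stops exactly at the theorem's core. You write that ``carrying out this extraction \dots is the real work; granting it, \dots'' --- and that extraction is not a routine verification but the entire content of the paper's Lemmas~\ref{lemma:noflat} and~\ref{lemma:intface}. The paper's version requires: (i) showing links of boundary vertices are $3$-connected, which is precisely where the hypothesis that no interior triangle has three boundary vertices enters (you instead assign that hypothesis the much weaker role of excluding an all-boundary tetrahedron, which misses its actual function); (ii) a separate degeneracy argument (Lemma~\ref{lemma:noflat}) ruling out a vertex with all neighbors coplanar by producing a $K_{3,3,1}$ from a $3$-separator in a link --- your radial-projection criterion silently assumes this non-degeneracy; and (iii) a three-case analysis (edge $(u,l)$ present; disjoint external path; external path meeting the link) each terminating in an explicit $K_6$ or $K_{3,3,1}$, using a structural fact about chordless planar triangulations (Proposition~\ref{prop:interior}) to contract the relevant pieces. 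Your sketch of ``three branches, three connecting routes, apex $v$'' gestures at the right objects but does not establish that the contracted pieces are disjoint, connected, and mutually adjacent in the required pattern --- the paper needs decreasing paths confined to the open lower half-space to guarantee disjointness, and this is not bookkeeping one can wave at.

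A secondary issue: your local condition is stated per interior vertex (connectivity of the upper neighbors \emph{within the link} for every plane through $f(v)$), whereas the paper's is per interior face (the two apexes lie strictly on opposite sides). Even granting your minor extraction, you would still owe an argument that your half-space condition on $L_v$ implies the radial projection is a local homeomorphism at every vertex \emph{and along every edge} of $L_v$, and you would need to handle local injectivity of $f$ at interior points lying in stars of boundary vertices (e.g.\ interior points of an interior edge with both endpoints on $\partial M$), which your restriction to interior vertices does not cover; the paper's face-by-face orientation propagation plus the ray argument avoids both issues. As it stands the proposal is a correct plan with the decisive step missing.
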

We hasten to point out that this result has little direct consequence on the practice of using the commonly generated tetrahedral meshes for creating PL mappings: almost all of them have a $K_6$-minor and, consequently, a convex combination mapping will likely not be an embedding. We discuss possible practical consequences in Section~\ref{sec:discussion}.

\section{Tetrahedral meshes and Linkless embeddings}
\label{sec:basics} 

A tetrahedral mesh $T$ is a simplicial complex consisting of vertices, edges, triangular faces and tetrahedral cells. We assume the tetrahedral mesh is a \emph{topological ball}, meaning it has a simply connected interior. The boundary forms the graph of a polyhedron, a planar 3-connected graph with triangular faces~\cite{Steinitz}. Boundary faces are incident on one cell, interior faces are incident on two cells. In the following we assume that interior triangles satisfy the assumption of Theorem~1, i.e., they are incident on at most two boundary vertices. 

The \emph{star} of a vertex $v$ is formed by the simplices incident on $v$. The \emph{link} is the boundary of the star. We make the following observation about the connectivity of links for tetrahedral meshes satisfying assumptions of Theorem~1:
\begin{lemma}
The link $L_v$ of a vertex $v$ is 3-connected if interior triangles are not incident on only boundary vertices.
\label{lemma:link-connectivity}
\end{lemma}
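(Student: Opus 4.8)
The plan is to prove the statement by splitting on whether $v$ lies in the interior or on the boundary of $T$, and in each case reducing to the $3$-connectivity of a standard triangulation. Throughout I use the usual dictionary between $L_v$ and the star of $v$: faces of $L_v$ correspond to tetrahedra of $T$ through $v$, edges of $L_v$ to triangles of $T$ through $v$, and an edge of $L_v$ lies on the boundary cycle $\partial L_v$ exactly when the corresponding triangle through $v$ is a boundary triangle of $T$. Since $T$ is a simplicial $3$-ball, $L_v$ is a triangulated $2$-sphere when $v$ is interior, and a triangulated $2$-disk whose boundary $\partial L_v$ is the link of $v$ in the boundary polyhedron $\partial T$ when $v$ is a boundary vertex.

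If $v$ is interior, $L_v$ is a triangulated $2$-sphere, i.e.\ a maximal planar graph on at least four vertices, and such graphs are $3$-connected~\cite{Steinitz}. This case uses none of the hypotheses; the role of the hypothesis --- and the whole difficulty --- is with boundary vertices.

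If $v$ is a boundary vertex, I first use the hypothesis to show that the triangulated disk $L_v$ has \emph{no chord}, meaning every edge of $L_v$ with both endpoints on $\partial L_v$ is itself an edge of $\partial L_v$. Indeed, a chord $ab$ would be an interior edge of $L_v$ with $a,b\in\partial L_v$, hence by the dictionary $vab$ is an \emph{interior} triangle of $T$; but $v$, $a$, $b$ all lie on $\partial T$, so $vab$ would be an interior triangle incident only on boundary vertices, contradicting the hypothesis. It then remains to prove the purely combinatorial statement that a chordless triangulated $2$-disk $D$ with at least four vertices is $3$-connected (the degenerate case $|V(D)|=3$ forces $T$ to be a single tetrahedron, whose graph $K_4$ is excluded already by $4$-connectivity in Theorem~1).

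For the combinatorial statement I would cone off the boundary: let $\widehat D$ be $D$ together with a new vertex $w$ joined to all of $\partial D$, so that $\widehat D$ is a triangulated $2$-sphere, hence $3$-connected. Supposing $\{a,b\}$ disconnects $D$, connectivity of $\widehat D-\{a,b\}$ together with the fact that $w$ is adjacent only to $\partial D\setminus\{a,b\}$ forces every component of $D-\{a,b\}$ to meet $\partial D$. A short case analysis then gives a contradiction: if one of $a,b$ is interior, or $ab$ is a boundary edge, then $\partial D\setminus\{a,b\}$ is connected and lies in one component; and if $a,b$ are non-adjacent vertices of $\partial D$, then chordlessness gives $a\not\sim b$, the open stars $\mathrm{st}(a)$, $\mathrm{st}(b)$ are disjoint boundary neighborhoods, and $|D|\setminus(\mathrm{st}(a)\cup\mathrm{st}(b))$ --- which is precisely the polyhedron of the induced subcomplex $D-\{a,b\}$ --- is a closed disk with two disjoint boundary bites removed, so it is connected. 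I expect this last step, that chordlessness alone forces $3$-connectivity of a triangulated disk, to be the main obstacle: the delicate point is a separating pair on the boundary, where one must rule out both a chord and an ``interior pocket'' reachable only through two boundary vertices, and this is exactly what the coning trick (importing $3$-connectivity of the sphere) plus an elementary surface-topology argument are for. The interior case, the link/star dictionary, and the degenerate cases are all routine.
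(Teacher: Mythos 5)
Your proof is correct and follows the same decomposition as the paper: the interior case via Steinitz's theorem, and the boundary case by showing that the hypothesis on interior triangles forbids chords in the triangulated disk $L_v$. The one substantive difference is the last step: the paper simply cites \cite[3.2]{LAUMOND199087} for the fact that a chordless triangulation of a polygon is 3-connected, whereas you prove this from scratch by coning off the boundary and importing the 3-connectivity of the resulting sphere triangulation. Your coning argument is sound; the only delicate point is the pair of non-adjacent boundary vertices $a,b$, and there the cleanest way to finish your ``boundary bites'' step is to note that chordlessness gives $a\not\sim b$, so the link of $a$ in $D$ is a path joining the two boundary arcs of $\partial D\setminus\{a,b\}$ while avoiding $b$, whence $\{a,b\}$ cannot separate. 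This makes the lemma self-contained, which is a genuine (if modest) improvement over the citation. One small inaccuracy: if $L_v$ is a single triangle this does not force $T$ to be a single tetrahedron --- only the star of $v$ is one --- but the case is still excluded, since the triangle opposite $v$ would then be an interior triangle on three boundary vertices (or $v$ would have degree 3, contradicting 4-connectivity).
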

\begin{proof}
If $v$ is an interior vertex, $L_v$ is 3-connected by Steinitz's theorem~\cite{Steinitz}.
Let $v$ be a boundary vertex and $V_b$ be the boundary vertices connected to $v$. The link $L_v$ is a planar triangulation with boundary $V_b$. Every edge in $L_v$ forms a triangle with $v$. A chord would induce a triangle that is not on the boundary of $T$, but all its vertices are boundary vertices. So there is no chord. This implies that the triangulation $L_v$ is 3-connected~\cite[3.2]{LAUMOND199087}. 
\end{proof}

%The minimal degree of a vertex \emph{restricted to the link} is 3, showing that the minimal degree of an interior edge in $T$ is 3.
Every interior vertex has degree at least 4. Boundary vertices with degree 3 create \emph{ears}: tetrahedra with 3 faces on the boundary connected to the remaining tetrahedral mesh through a single face. Since we ask that there are no interior faces incident on three boundary vertices, the tetrahedral meshes we consider contain no ears, and also boundary vertices have degree at least 4.

% In general, the vertex degree is at least 4, because we assume the graph to be 4-connected. This rules out the existence of \emph{ears}: tetrahedra with 3 faces on the boundary connected to the remaining tetrahedral mesh through a single face. 

By \emph{tetrahedral graph} we mean the graph induced by the vertices and edges of $T$. 
Linkless embeddings of graphs, intuitively, are realizations of the graph in $\R^3$ so that there are no two cycles that are linked, i.e., that cannot be homotopically deformed so that the two cycles are topological disks. Sachs~\cite{Sachs} showed that the Petersen graphs (see Figure~\ref{fig:pg}) are intriniscally linked and minimal. He suspected that all linklessly embeddable graphs can be characterized as those without a Petersen graph as a minor. This was eventually proved by Robertson et al.~\cite{Robertson:1995}. 

Of the seven graphs in the Peterson family, we only need the complete graph $K_6$ and, to show that certain degeneracies cannot occur, the complete tripartite graph $K_{3,3,1}$. This seems quite natural, as the other graphs have vertices with degree less than 4, so cannot serve as minimal counter-examples for tetrahedral Tutte embeddings.

\begin{figure}
    \centering
    \includegraphics[width=0.5\linewidth]{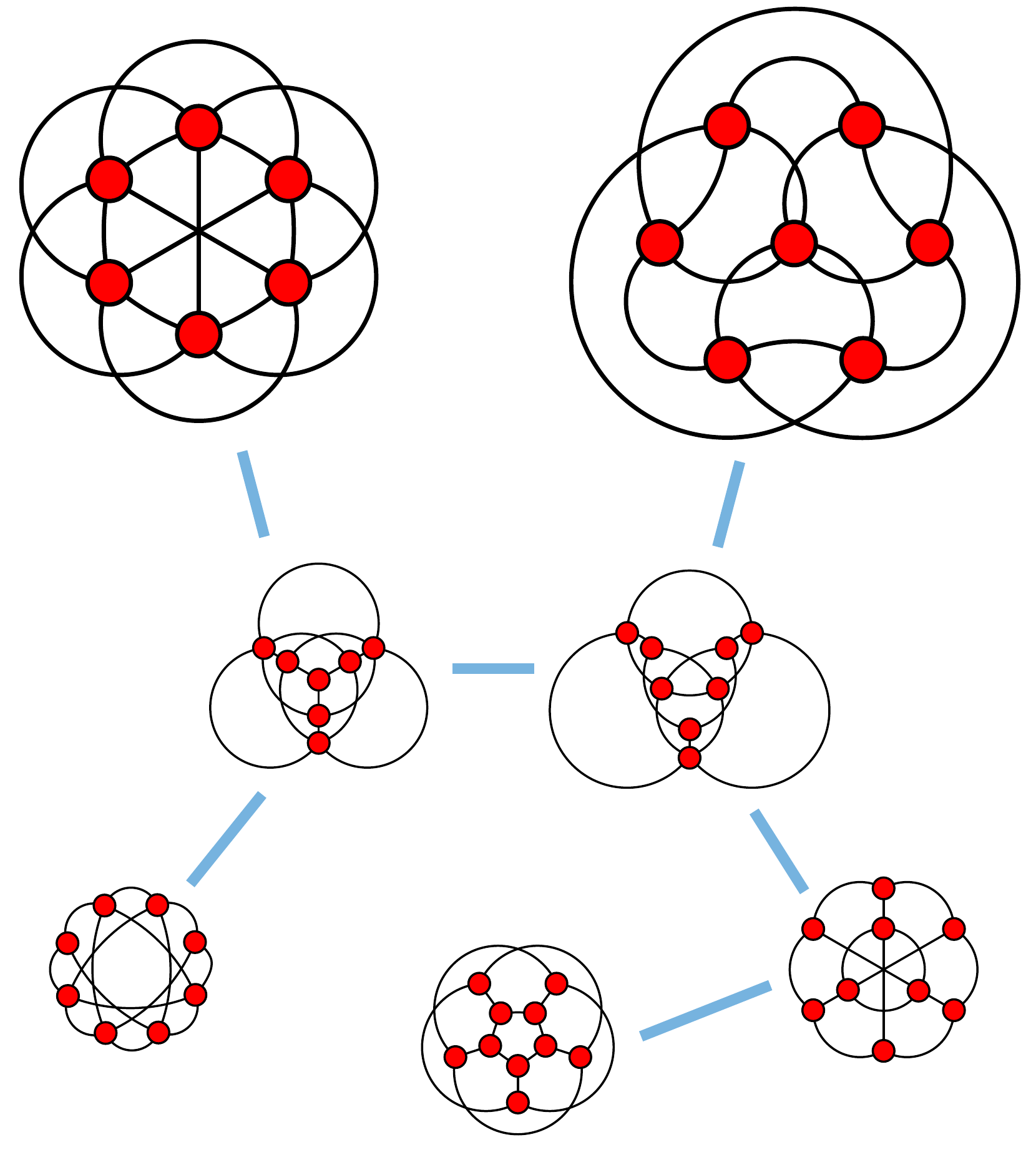}
    \caption{The seven graphs of the Petersen family. The larger ones in the top row, $K_6$ and $K_{3,3,1}$, are necessary for the characterization of tetrahedral meshes that can be embedded by convex combination mappings. A graph is linklessly embeddable if it has no minor in this family. Drawing adopted from David Epstein (public domain).}
    \label{fig:pg}
\end{figure}

The proof of the main statement is inspired by earlier proofs and their adaptations of Tutte~\cite{Tutte}, Geelen~\cite{Geelen}, Spielman~\cite{Spielman}, and the one focusing on PL mappings by Floater~\cite{Floater:2003:PLM}. 
% \begin{lemma}
% \label{lemma:link}
% Let $f$ be an interior triangle in a tetrahedral graph $T$ with incident tetrahedra $t_l,t_r$. Let $l,r$ be the vertices in $t_l,t_r$ not in $f$. Path $P$ is disjoint from $f$ and connects $l$ with $r$. If there exists a connected subgraph $S$ disjoint from $P$ that contains the vertices of $f$ but not its edges, then $T$ contains a minor from the Peterson family. 
% \end{lemma}
% \noindent Construct the subgraph $Q$ as follows: first, let $Q'$ be the subgraph resulting from deleting the vertices of $f$ as well as their incident edges from $T$. Note that $Q'$ is connected, because $f$ contained 3 vertices and $T$ is 4-connected. To get $Q$ add the vertices of $f$ to $Q'$ and connect them with edges not in $t_l,t_r$. Such edges exist, because $S$ is connected but contains no edges in $f$ and neither $l$ nor $r$. Thus, $Q$ is connected, contains all vertices but none of the edges in $t_l,t_r$. Contract the vertices in $V(Q)-V(f)-V(P)$ to $v$. By construction, $v$ is connected to the vertices in $f$. For the connection to $l$ and $r$ there are only the two possibilities illustrated in Figure~\ref{fig:link}, both of which have a minor in the Petersen family. $\square$
% \begin{figure}
%     \centering
%     \includegraphics[width=0.95\linewidth]{}
%     \caption{Illustration for Lemma~\ref{lemma:link}}
%     \label{fig:link}
% \end{figure}

\section{Realization by Convex Embedding, and Degeneracies}

The graph $T$ is \emph{realized} by assigning coordinates to the vertices, i.e. $V \mapsto \R^{3 \times |V|}$. We denote the coordinate of vertex $v$ as $\mv{x}(v) \in \R^3$. We assume the vertices on the boundary are realized so that the boundary faces form a convex polyhedron. 

Every interior vertex is realized as a convex combination of its neighbors:
\begin{equation}
    \mv{x}(v) = \sum_{(v,w) \in T} b_{(v,w)}\mv{x}(w),
    \quad
    \sum_{(v,w) \in T} b_{(v,w)} = 1,
    \quad
    b_{(v,w)} > 0.
    \label{eq:convex}
\end{equation}
This implies that $\mv{x}(v)$ lies strictly in the interior of the convex hull of its neighbors. In the following we want to show that (a) the convex hull of the neighbors cannot degenerate, i.e.\ be contained in a common plane; and (b) that $\mv{x}(v)$ lies strictly inside the boundary polyhedron for interior vertices $v$.

We start by recalling that a plane through $\mv{q}\in\R^3$ with normal vector $\mv{n} \in \R^3, \mv{n}\tp\mv{n} = 1$ is defined as the set
\begin{equation}
    P_{\mv{n},\mv{q}} = \{\mv{x} \in \R^3: \mv{n}\tp(\mv{x} - \mv{q}) = 0\}.
\end{equation}
The \emph{positive open half-space} is the set of points on the side of the plane in the direction $\mv{n}$, i.e.\ $P^+_{\mv{n},\mv{q}} = \{\mv{z} \in \R^3: \mv{n}\tp(\mv{z}-\mv{q}) > 0\}$. The negative open half-space is defined analogously. We make the following simple observation for interior vertices:
\begin{lemma}
Let $v$ be an interior vertex. Consider the plane $P_{\mv{n},\mv{x}(v)}$. If $v$ has a neighbor realized in $P^+_{\mv{n},\mv{x}(v)}$ then it has at least one neighbor in $P^-_{\mv{n},\mv{x}(v)}$ (and vice versa).
\label{lemma:both-halfspaces}
\end{lemma}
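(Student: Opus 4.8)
The plan is to use the convex combination property in Equation~\eqref{eq:convex} directly. Suppose for contradiction that $v$ has a neighbor $w_+$ with $\mv{n}\tp(\mv{x}(w_+) - \mv{x}(v)) > 0$ but \emph{no} neighbor in the negative open half-space $P^-_{\mv{n},\mv{x}(v)}$; that is, every neighbor $w$ satisfies $\mv{n}\tp(\mv{x}(w) - \mv{x}(v)) \geq 0$, with strict inequality for $w = w_+$. I would then consider the scalar quantity $\mv{n}\tp(\mv{x}(v) - \mv{x}(v)) = 0$ and compare it against the weighted average of $\mv{n}\tp(\mv{x}(w) - \mv{x}(v))$ over all neighbors $w$ of $v$.

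The key computation is: applying $\mv{n}\tp(\,\cdot\, - \mv{x}(v))$ to both sides of $\mv{x}(v) = \sum_{(v,w)\in T} b_{(v,w)}\mv{x}(w)$ and using $\sum b_{(v,w)} = 1$ gives
\begin{equation}
    0 = \mv{n}\tp(\mv{x}(v) - \mv{x}(v)) = \sum_{(v,w)\in T} b_{(v,w)}\,\mv{n}\tp(\mv{x}(w) - \mv{x}(v)).
\end{equation}
Under the contradiction hypothesis every summand is nonnegative and the summand for $w_+$ is strictly positive, while every weight $b_{(v,w)}$ is strictly positive by Equation~\eqref{eq:convex}. Hence the right-hand side is strictly positive, contradicting that it equals $0$. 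Therefore $v$ must have at least one neighbor in $P^-_{\mv{n},\mv{x}(v)}$. The symmetric statement (a neighbor in $P^-$ forces a neighbor in $P^+$) follows by replacing $\mv{n}$ with $-\mv{n}$.

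I do not expect any real obstacle here; the statement is an immediate consequence of the fact that a point lying in the interior of the convex hull of a finite set cannot be separated from that set by a hyperplane through it. The only point requiring a little care is that the separating functional is an \emph{open} half-space condition, which is exactly why the strict positivity of the weights $b_{(v,w)}$ and the existence of at least one neighbor strictly on the positive side are both needed to reach the contradiction; this is why the lemma is stated for interior vertices, for which Equation~\eqref{eq:convex} applies.
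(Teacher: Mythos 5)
Your proof is correct and is exactly the argument the paper intends: the paper's one-line proof ("by contradiction from $\mv{x}(v)$ being in the interior of the convex hull of its neighbors") is precisely the computation you spell out, applying the linear functional $\mv{n}\tp(\,\cdot\, - \mv{x}(v))$ to the convex combination in Equation~\eqref{eq:convex} and using the strict positivity of the weights.
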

\begin{proof}
This follows by contradiction from $\mv{x}(v)$ being in the interior of the convex hull of its neighbors.
\end{proof}
If $v$ is a boundary vertex, it is possible that that all neighbors are contained in $P_{\mv{n},\mv{x}(v)}$ and either $P^+_{\mv{n},\mv{x}(v)}$ or $P^-_{\mv{n},\mv{x}(v)}$. In this case we call $v$ \emph{extreme} along $\mv{n}$, because the convexity of the boundary polyhedron implies that there is no other vertex $v'$ with $\mv{n}\tp(v') > \mv{n}\tp(v)$. We also observe that interior vertices are never extreme:

\begin{lemma}
An interior vertex is not realized on the boundary. 
\label{lem:strictly-inside}
\end{lemma}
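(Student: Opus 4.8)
The plan is to argue by contradiction: suppose some interior vertex is realized on the boundary polyhedron, and derive a violation of the convex-combination property \eqref{eq:convex}. Let $v$ be an interior vertex with $\mv{x}(v)$ lying on the convex boundary polyhedron, and let $F$ be a face (or, in degenerate cases, an edge or vertex) of that polyhedron supporting $\mv{x}(v)$; let $\mv{n}$ be the outward unit normal of a supporting plane of the polyhedron through $\mv{x}(v)$, so that $P_{\mv{n},\mv{x}(v)}$ is a supporting plane and all mesh vertices lie in $P_{\mv{n},\mv{x}(v)} \cup P^-_{\mv{n},\mv{x}(v)}$. In particular, every neighbor $w$ of $v$ satisfies $\mv{n}\tp(\mv{x}(w)-\mv{x}(v)) \le 0$. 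By Lemma \ref{lemma:both-halfspaces}, if any neighbor lay strictly in $P^-_{\mv{n},\mv{x}(v)}$ there would have to be a neighbor strictly in $P^+_{\mv{n},\mv{x}(v)}$, which is impossible since the polyhedron is strictly convex and $\mv{x}(v)$ is on its boundary. Hence \emph{all} neighbors of $v$ lie in the plane $P_{\mv{n},\mv{x}(v)}$.

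Next I would propagate this flatness: the argument just given applies to $v$, but now I want to show it forces an entire region of the mesh into $P_{\mv{n},\mv{x}(v)}$. Any neighbor $w$ of $v$ that is itself an interior vertex also lies on the supporting plane, hence on the boundary of the polyhedron, so by the same reasoning \emph{its} neighbors all lie in $P_{\mv{n},\mv{x}(v)}$ as well. Iterating, the connected component of interior vertices reachable from $v$ through interior vertices, together with all of their neighbors, is contained in the single plane $P_{\mv{n},\mv{x}(v)}$. Since the tetrahedral mesh is a topological ball with $v$ in its interior, the star of $v$ is a $3$-ball; but all of its vertices now lie in a common plane, so the tetrahedra in the star of $v$ are degenerate (zero volume). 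This contradicts the fact that in a realization we may at least speak of the combinatorial tetrahedra spanning $v$, and more concretely contradicts strict convexity together with the assumption that the interior is genuinely $3$-dimensional — equivalently, it contradicts Lemma \ref{lemma:both-halfspaces} once one exhibits a neighbor of $v$ off the plane, which must exist because $v$'s link $L_v$ is $3$-connected (Lemma \ref{lemma:link-connectivity}) and spans a neighborhood of $v$ in the $3$-ball.

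The cleanest route, which I would favor for the writeup, avoids the propagation entirely: take $\mv{n}$ to be the outward normal of a supporting plane at $\mv{x}(v)$. Every neighbor $w$ satisfies $\mv{n}\tp(\mv{x}(w)-\mv{x}(v)) \le 0$, i.e.\ no neighbor lies in $P^+_{\mv{n},\mv{x}(v)}$. But $\mv{x}(v)$ is a strictly convex combination of its neighbors, so $0 = \mv{n}\tp(\mv{x}(v)-\mv{x}(v)) = \sum_{(v,w)\in T} b_{(v,w)}\, \mv{n}\tp(\mv{x}(w)-\mv{x}(v))$ with all $b_{(v,w)}>0$ and all terms $\le 0$; hence $\mv{n}\tp(\mv{x}(w)-\mv{x}(v))=0$ for every neighbor $w$, meaning all neighbors lie in the supporting plane $P_{\mv{n},\mv{x}(v)}$. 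Then $v$ together with its neighbors — hence the whole closed star of $v$ — lies in this plane, so every tetrahedron incident on $v$ is flat. Since $v$ is interior, its link $L_v$ is the boundary of a $3$-ball neighborhood and is $3$-connected by Lemma \ref{lemma:link-connectivity}, so it cannot be realized inside a $2$-plane; this is the contradiction.

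The main obstacle is making the last step rigorous: "a $3$-connected planar triangulation that is the link of an interior vertex cannot have all its vertices coplanar with that vertex" needs the observation that $L_v$ contains at least two vertices strictly on opposite sides of any plane through $\mv{x}(v)$ — precisely the content one gets from Lemma \ref{lemma:both-halfspaces} applied in the non-degenerate directions, or alternatively from the fact that the star of an interior vertex in a simplicial $3$-ball contains tetrahedra of both orientations around $v$. I would phrase this as: since $v$ is interior, not all neighbors can lie in one closed half-space bounded by a plane through $\mv{x}(v)$ unless they all lie in the plane itself (Lemma \ref{lemma:both-halfspaces}), and if they all lie in the plane then $v$ has no neighbor off the plane, contradicting that the link is $3$-connected and therefore has at least $4$ vertices spanning an affine subspace of dimension at least $3$ — the latter because a $3$-connected planar graph is not a subgraph of anything realizable in a line, and in fact the convex-position realization forced on a contracted problem would collapse. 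I expect to need a short separate remark pinning down exactly why coplanarity of the star is impossible for an interior vertex; everything else is a one-line half-space computation.
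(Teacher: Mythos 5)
Your opening computation is correct and matches the paper's starting point: taking $\mv{n}$ as the outward normal of a supporting plane at $\mv{x}(v)$, the convex-combination identity (or Lemma~\ref{lemma:both-halfspaces}) forces every neighbor of $v$ into the plane $P_{\mv{n},\mv{x}(v)}$, and your propagation step correctly extends this to the whole component of interior vertices reachable from $v$ inside that plane. The proof fails at the point where you yourself flag the ``main obstacle'': the claimed contradiction that a $3$-connected link ``cannot be realized inside a $2$-plane'' is not valid. A realization in this paper is merely an assignment of coordinates; nothing established at this stage prevents all vertices of the closed star of an interior vertex from being coplanar, and $3$-connectivity of $L_v$ places no constraint whatsoever on the affine span of an arbitrary (possibly degenerate, possibly non-injective) realization. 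Ruling out flat stars is precisely the content of the later Lemma~\ref{lemma:noflat}, which needs the $K_{3,3,1}$ exclusion and a substantial argument --- and whose proof itself invokes Lemma~\ref{lem:strictly-inside}, so appealing to that fact here would be circular. Likewise, ``the tetrahedra in the star would be degenerate'' is not a contradiction: non-degeneracy of tetrahedra is a conclusion of the whole paper, not a hypothesis.

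The ingredient you are missing is $4$-connectivity, which the paper uses and you never invoke. Your propagation paragraph is in fact one step away from the correct finish: the flat component you construct consists of interior vertices all of whose neighbors lie in the supporting plane, and the only \emph{boundary} vertices in that plane are the at most $3$ vertices of the supporting boundary element $b$. If every vertex of the component had all neighbors in the plane, deleting those $\le 3$ boundary vertices would disconnect the (nonempty) interior part of the component from the rest of $T$ --- impossible in a $4$-connected graph. Hence some interior vertex $w$ of the component has a neighbor strictly on the negative side of the supporting plane, and Lemma~\ref{lemma:both-halfspaces} applied to $w$ then produces a neighbor strictly on the positive side, contradicting that the plane supports the convex boundary polyhedron. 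That is the paper's argument; without it (or some substitute use of $4$-connectivity) your proof does not close.
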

\begin{proof}
Assume vertex $v_i$ is realized on a boundary element $b$ -- this may be a vertex, edge, or face. Let $V_i$ be the vertices realized in $b$ that are connected to $v_i$ by a path realized in $b$. All vertices in $V_i$ are realized in a common plane $P_{\mv{n},\mv{v}_i}$ and $\mv{n}$ can be chosen so that $\mv{n}\tp\mv{v}_i \ge \mv{n}\tp\mv{v}, v \in T$ because the boundary polyhedron is convex. We may delete the at most 3 boundary vertices incident on $b$ and $V_i$ remains connected to a vertex $v'$ outside $V_i$ because we assume $T$ to be 4-connected. The vertex $v'$ is not realized in $P_{\mv{n},\mv{v}_i}$ and $\mv{n}$, so by Lemma~\ref{lemma:both-halfspaces} there exists a vertex in $V_i$ with neighbors realized in both half-spaces $P^+_{\mv{n},\mv{x}(v)}$, which is a contradiction.  
\end{proof}

We say that a path $v_0,v_1,\ldots$ is \emph{non-decreasing} w.r.t.\ $\mv{n}$ if the realizations of the vertices along the path satisfy $\mv{n}\tp(v_0) \le \mv{n}\tp(v_1) \le \ldots$. We observe that any vertex is connected to an extreme boundary vertex by a non-decreasing path.
\begin{lemma}
    For any vertex $v_0$ and any direction $\mv{n}$ there is a non-decreasing path $v_0,v_1,\ldots,v_b$ satisfying $\mv{n}\tp(v_0) \le \mv{n}\tp(v_0) \le \ldots \le \mv{n}\tp(v_b)$, with $v_b$ on the boundary and extreme.
    \label{lemma:weakpath}
\end{lemma}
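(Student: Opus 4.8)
The plan is to build the non-decreasing path greedily, starting from $v_0$ and at each step moving to a neighbor that does not decrease the value of $\mv{n}\tp(\cdot)$, terminating when we reach an extreme boundary vertex. The key point is that such a step is always available until we terminate correctly. First I would observe that if the current vertex $v_i$ is an interior vertex, then by Lemma~\ref{lemma:both-halfspaces} (applied with the plane $P_{\mv{n},\mv{x}(v_i)}$) it has a neighbor in the positive open half-space $P^+_{\mv{n},\mv{x}(v_i)}$, i.e.\ a neighbor $v_{i+1}$ with $\mv{n}\tp(v_{i+1}) > \mv{n}\tp(v_i)$; this strictly increases the objective, so we may extend the path. (If $v_i$ has \emph{no} neighbor strictly above it, the contrapositive of Lemma~\ref{lemma:both-halfspaces} together with $\mv{x}(v_i)$ being a strict convex combination of its neighbors would force all neighbors into the plane $P_{\mv{n},\mv{x}(v_i)}$, contradicting Lemma~\ref{lem:strictly-inside} unless $v_i$ is a boundary vertex — so this case does not arise for interior vertices.)

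Next I would handle the case where $v_i$ is a boundary vertex. If $v_i$ is extreme along $\mv{n}$, we stop and set $v_b = v_i$. Otherwise, $v_i$ is not extreme, so by definition there is some vertex $v'$ in $T$ with $\mv{n}\tp(v') > \mv{n}\tp(v_i)$; I need to argue that $v_i$ then has a \emph{neighbor} that is not strictly below it, so the path can continue without decreasing the objective. Here I would use the structure of the link $L_{v_i}$: by Lemma~\ref{lemma:link-connectivity}, $L_{v_i}$ is $3$-connected, hence in particular connected, and its vertices are exactly the neighbors of $v_i$. Since $v_i$ is not extreme, the neighbors of $v_i$ cannot all lie weakly below $v_i$ with respect to $\mv{n}$ in a way consistent with convexity of the boundary polyhedron — more carefully, if every neighbor $w$ of $v_i$ satisfied $\mv{n}\tp(w) < \mv{n}\tp(v_i)$, then $v_i$ would be a strict local maximum; for a boundary vertex whose star is a topological disk (interior vertex) or half-disk (boundary vertex) with the boundary polyhedron convex, a strict local max of a linear function is a global max, i.e.\ $v_i$ would be extreme, a contradiction. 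Hence there is a neighbor $v_{i+1}$ with $\mv{n}\tp(v_{i+1}) \ge \mv{n}\tp(v_i)$, and we extend the path.

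Finally I would argue termination. Along the path the sequence $\mv{n}\tp(v_0) \le \mv{n}\tp(v_1) \le \cdots$ is non-decreasing, and at interior vertices it is strictly increasing; moreover whenever we are at a boundary vertex that is not extreme we can also choose (by the local-max argument above, refined to pick a strict increase whenever one exists, and otherwise noting we remain on a face of the convex polyhedron where the value is locally maximal along that face) to either strictly increase or move within a bounded set. To make termination clean I would instead phrase the walk on the finite graph and use the standard argument: consider a path that is non-decreasing and, among all such paths from $v_0$, has maximal final value $\mv{n}\tp(v_b)$ and is shortest with that property; if $v_b$ were not extreme, the argument of the previous paragraph produces a neighbor with value $\ge \mv{n}\tp(v_b)$, and if that value were strictly larger we contradict maximality, while if it equals $\mv{n}\tp(v_b)$ we can show (using that the set of vertices achieving the current maximum value lies in a common supporting plane, together with $3$-connectivity of links and $4$-connectivity of $T$, exactly as in the proof of Lemma~\ref{lem:strictly-inside}) that this equal-value region must contain an extreme vertex, which we can reach by a non-decreasing path — contradiction. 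Hence $v_b$ is extreme.

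The main obstacle I anticipate is the boundary-vertex case, specifically ruling out the situation where the greedy walk stalls on a ``plateau'' of boundary vertices all having the same $\mv{n}$-value without any of them being extreme. This is precisely where the hypotheses — $4$-connectivity of $T$, $3$-connectivity of the links (Lemma~\ref{lemma:link-connectivity}), and strict convexity of the boundary polyhedron — have to be combined, essentially re-running the contradiction in the proof of Lemma~\ref{lem:strictly-inside}: a connected set of boundary vertices lying in a common supporting plane $P_{\mv{n},\cdot}$ that is not all of the boundary must, after deleting the at most three boundary vertices of the supporting face, still connect (by $4$-connectivity) to a vertex strictly below, forcing some vertex of the plateau to have a neighbor strictly above it and hence the walk to continue upward. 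Everything else is bookkeeping on a finite graph.
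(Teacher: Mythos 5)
Your overall strategy (a greedy non-decreasing walk that can only terminate at an extreme boundary vertex) is the same as the paper's, but there is a genuine gap in your treatment of interior vertices. You claim that an interior vertex $v_i$ always has a neighbor strictly above it, arguing that otherwise all neighbors would lie in $P_{\mv{n},\mv{x}(v_i)}$, ``contradicting Lemma~\ref{lem:strictly-inside}.'' That lemma only says an interior vertex is not realized \emph{on the boundary polyhedron}; it says nothing about an interior vertex whose neighbors all happen to lie in an arbitrary plane cutting through the interior of the polyhedron. At this point in the paper such a flat configuration has not been excluded --- that is the content of Lemma~\ref{lemma:noflat}, which is proved \emph{after} this lemma and depends on it (via Corollary~\ref{cor:connected}), so invoking it here would be circular; indeed this is precisely why Lemma~\ref{lemma:weakpath} only claims a non-decreasing path, with the strictly increasing version deferred to Corollary~\ref{cor:strictpath}. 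So the walk can genuinely stall at an interior vertex sitting on a flat ``plateau,'' and this is the one case that requires real work.

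Your later plateau discussion does not close this hole, because it only treats plateaus of \emph{boundary} vertices lying in a common \emph{supporting} plane of the convex polyhedron (re-running the argument of Lemma~\ref{lem:strictly-inside}). An interior plateau lies in a plane through the interior, which supports nothing, so neither the convexity of the boundary nor the deletion of the at most three vertices of a face applies. The paper's fix is elementary: let $V$ be the set of vertices realized in $P_{\mv{n},\mv{x}(v)}$ and connected to the stalled vertex by a path realized in that plane; since the boundary polyhedron is not flat, $T$ has vertices outside the plane, and connectivity of $T$ yields some $v'\in V$ with a neighbor off the plane. If $v'$ is interior, Lemma~\ref{lemma:both-halfspaces} gives it a neighbor strictly above, reachable from the stalled vertex by a constant (hence non-decreasing) path through $V$, and the walk resumes; if every such exit is a boundary vertex with no neighbor strictly above, that vertex is extreme and the constant path terminates there. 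Your boundary-vertex case and your termination bookkeeping are fine; it is only this interior-plateau step that needs to be replaced.
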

\begin{proof} 
Consider vertex $v_k$ along the path, which may be an interior or boundary vertex. If $v_k$ has a neighbor $v'$ satisfying $\mv{n}\tp\mv{x}(v') > \mv{n}\tp\mv{x}(v_k)$ set $v_{k+1} = v'$ and continue with $v_{k+1}$. 

Now assume we have found a vertex $v$ such that $\mv{n}\tp\mv{x}(v) \ge \mv{n}\tp\mv{x}(v')$ for all vertices $v'$ adjacent to $v$. If $v$ is on the boundary we are done. If $v$ is interior,  Lemma~\ref{lemma:both-halfspaces} forces $\mv{n}\tp\mv{x}(v) = \mv{n}\tp\mv{x}(v')$, meaning all neighbors of $v$ lie in the plane $P_{\mv{n},\mv{x}(v)}$. Let V be the set of vertices realized in $P_{\mv{n},\mv{x}(v)}$ that are connected to $v$ by a path realized in $P_{\mv{n},\mv{x}(v)}$. Because the boundary polyhedron is not flat there are vertices in $T$ not in $V$. But $T$ is connected so there must be a vertex $v' \in V$ connected to a vertex not in $P_{\mv{n},\mv{x}(v)}$. If $v'$ is interior, Lemma~\ref{lemma:both-halfspaces} implies that $v'$ has a neighbor $v''$ satisfying $\mv{n}\tp(v'') > \mv{n}\tp\mv{x}(v') = \mv{n}\tp\mv{x}(v_i)$. Because $v$ is connected to $v'$ through vertices contained in $P_{\mv{n},\mv{x}(v_i)}$ there is non-decreasing path through $V$ to $v''$, and the process continues. This shows that the process has to end in a boundary vertex $v_b$, which is extreme because it satisfies $\mv{n}\tp\mv{x}(v_b) \ge \mv{n}\tp\mv{x}(v')$ for all neighbors $v'$.
\end{proof}

\begin{corollary}
Let $P^+_{\mv{n},\mv{q}}$ be an open half-space and $V_{\mv{n},\mv{q}}^+$ be the vertices contained in it, i.e.\ $v \in V_{\mv{n},\mv{q}} \Rightarrow \mv{x}(v) \in P^+_{\mv{n},\mv{q}}$. The graph induced by $V_{\mv{n},\mv{q}}^+$ is connected.
\label{cor:connected}
\end{corollary}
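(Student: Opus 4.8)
The plan is to combine the non-decreasing paths guaranteed by Lemma~\ref{lemma:weakpath} with the convexity of the boundary polyhedron. Fix any two vertices $u, w \in V^+_{\mv{n},\mv{q}}$; then $\mv{n}\tp\mv{x}(u) > \mv{n}\tp\mv{q}$ and $\mv{n}\tp\mv{x}(w) > \mv{n}\tp\mv{q}$. First I would apply Lemma~\ref{lemma:weakpath} in the direction $\mv{n}$ to obtain a non-decreasing path from $u$ to an extreme boundary vertex $u_b$ and, likewise, a non-decreasing path from $w$ to an extreme boundary vertex $w_b$. Along any non-decreasing path every vertex $v_k$ satisfies $\mv{n}\tp\mv{x}(v_k) \ge \mv{n}\tp\mv{x}(v_0)$, so if $v_0 \in V^+_{\mv{n},\mv{q}}$ the entire path stays in $P^+_{\mv{n},\mv{q}}$; hence these two paths lie inside the induced subgraph on $V^+_{\mv{n},\mv{q}}$, and it only remains to connect $u_b$ to $w_b$ within $V^+_{\mv{n},\mv{q}}$.

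For the connection of $u_b$ and $w_b$ I would use the remark preceding Lemma~\ref{lem:strictly-inside}: an extreme vertex along $\mv{n}$ attains the maximum $M := \max_{v} \mv{n}\tp\mv{x}(v)$ over all vertices of $T$. Thus $u_b$ and $w_b$ both lie on the supporting plane $\{\mv{x}: \mv{n}\tp\mv{x} = M\}$ of the boundary polyhedron, and since $u \in V^+_{\mv{n},\mv{q}}$ forces $M \ge \mv{n}\tp\mv{x}(u) > \mv{n}\tp\mv{q}$, every vertex attaining $M$ lies in $V^+_{\mv{n},\mv{q}}$. By Lemma~\ref{lem:strictly-inside} all such vertices are boundary vertices, so they are exactly the vertices of $T$ on a single face of the boundary polyhedron --- a vertex, an edge, or (by strict convexity) a boundary triangle --- and this face induces a connected subgraph. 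Consequently $u_b$ and $w_b$ are joined by a path of such maximizers, all contained in $V^+_{\mv{n},\mv{q}}$, completing a walk from $u$ to $w$ in the induced subgraph.

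The only step with any subtlety is the last one, identifying the set of vertices that maximize $\mv{n}\tp\mv{x}$ as a connected subgraph; this is precisely where the convexity of the boundary polyhedron enters, via the fact that the maximizers form a single face whose carried sub-triangulation is connected. Everything else is bookkeeping of the inequalities along the paths from Lemma~\ref{lemma:weakpath}, so I do not anticipate any further obstacle.
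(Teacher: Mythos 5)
Your proof is correct and follows essentially the same route as the paper: non-decreasing paths from Lemma~\ref{lemma:weakpath} into the set of vertices maximizing $\mv{n}\tp\mv{x}$, which by convexity of the boundary polyhedron is a single face (vertex, edge, or triangle) and hence connected. The paper states this more tersely but the argument is identical.
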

\begin{proof}
Let $V^b_{\mv{n}}$ contain all boundary vertices $v_b$ that maximize $\mv{n}\tp\mv{x}(v_b)$. This set consists of either a single vertex, or two vertices incident on a common edge, or three vertices incident on a common face. Note that $V^b_{\mv{n}}$ is connected. Any vertex in $V_{\mv{n},\mv{q}}^+$ is connected to $V^b_{\mv{n}}$ by a non-decreasing path, which is entirely contained in $P^+_{\mv{n},\mv{q}}$. 
\end{proof}

We now show that the neighbors of a vertex cannot degenerate to a flat configuration if the graph is linklessly embeddable.  

\begin{lemma}
Let $v$ be a vertex in a tetrahedral graph $T$. If $v$ and all its neighbors are realized in a common plane $P_{\mv{n},\mv{v}}$ then $T$ has $K_{3,3,1}$ as a minor. 
\label{lemma:noflat}
\end{lemma}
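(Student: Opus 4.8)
The plan is to exhibit $K_{3,3,1}$ as a minor, with $v$ (or a small connected set containing $v$) playing the role of the apex.

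\smallskip
\emph{Reduction and set-up.}
First, $v$ must be interior: if $v$ were a boundary vertex its boundary neighbours would form a cycle on the boundary polyhedron and the boundary triangles through $v$ would be the triangles spanned by $v$ and consecutive such neighbours, so if $v$ and all of $N(v)$ lay in $P$ a whole neighbourhood of $v$ on the polyhedron would be flat, which is impossible since a vertex of a convex polytope is an extreme point and cannot lie in the relative interior of a $2$-face. Being interior, $v$ is not realized on the boundary by Lemma~\ref{lem:strictly-inside}, so $P$ is not a supporting plane; hence both open half-spaces bounded by $P$ contain vertices (e.g.\ the extreme vertices along $\pm\mv{n}$), and by Corollary~\ref{cor:connected} the vertex sets $A:=V^{+}_{\mv{n},\mv{x}(v)}$ and $B:=V^{-}_{\mv{n},\mv{x}(v)}$ each induce a connected subgraph of $T$. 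By~\eqref{eq:convex}, $\mv{x}(v)$ lies in the relative interior, inside $P$, of the convex hull of $\{\mv{x}(w):w\in N(v)\}$, and by Lemma~\ref{lemma:link-connectivity} the link $L_v$ is a $3$-connected planar triangulation on the vertex set $N(v)$.

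\smallskip
\emph{A $K_{3,3}$ from the cone over the link.}
Assume for the moment $\deg v\ge 5$, so $L_v\ne K_4$. Choose $w_0\in N(v)$ with neighbour cycle $D$ in $L_v$ (of length $\ge 3$); then $v$ and $w_0$ are adjacent to every vertex of $D$. Since $L_v$ is not a wheel with hub $w_0$, the disk bounded by $D$ not containing $w_0$ contains a vertex $w_1$. Aim for a $K_{3,3}$ with one side $\{v,w_0,w_1\}$ and the other side three branch sets $D_1\ni d_1$, $D_2\ni d_2$, $D_3\ni d_3$ for a triple $d_1,d_2,d_3\in D$: the edges $v\sim D_i$ and $w_0\sim D_i$ are automatic, while each missing edge $w_1\sim D_i$ is realized by enlarging $w_1$'s branch set along a connected subgraph of the flat neighbourhood of $v$ and, if need be, up into $A$ or down into $B$. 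The rerouting mechanism is simply that, $A$ and $B$ being connected, any two branch sets that each contain a vertex with a neighbour in $A$ (respectively $B$) are adjacent.

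\smallskip
\emph{The apex, and low degree.}
The apex $\{v\}$ is adjacent to a branch set iff that branch set meets $N(v)$, so I must build a seventh branch set $Z$ that meets $N(v)$ and is adjacent to the other six. If $\deg v\ge 6$ an unused neighbour $z_0$ of $v$ seeds $Z$, and its missing adjacencies are again supplied by detours through $A$, $B$ and the flat region; if $\deg v=5$, there is no unused neighbour and $Z$ is instead seeded inside the shell and joined up the same way; and if $\deg v=4$ (so $L_v=K_4$) a short extra argument provides the needed seeds — here one uses that \emph{no face of $L_v$ can have all three vertices on the boundary}, since such a face would be either a forbidden interior triangle on three boundary vertices or a boundary triangle whose supporting plane is $P$, contradicting that $v\in P$ is interior; this forces at least two neighbours of $v$ to be interior and pins the small configuration down enough to carry out the construction (or to reach a contradiction outright). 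In all cases the construction comes down to choosing $w_0$, the triple $d_1,d_2,d_3$, the seed of $Z$, and the half-space used to repair each missing adjacency, so that all seven branch sets are pairwise disjoint — and it is here that the $4$-connectivity of $T$ together with the connectedness of $A$, $B$ and of the flat region is used in full.

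\smallskip
\emph{Main obstacle.}
The delicate part, which I expect to be the bulk of the work, is the simultaneous bookkeeping in the last two steps: producing seven pairwise vertex-disjoint connected branch sets with the required complete-bipartite-plus-apex adjacency pattern, while keeping the half-space detours disjoint from one another and from the seeds, and while allowing for the flat region (the vertices reachable from $v$ by a path inside $P$) to extend well beyond the star of $v$. Everything before that — the reduction to an interior vertex with a non-supporting plane, the two connected half-space sets, and the $3$-connected planar triangulation structure of $L_v$ — is routine given Lemmas~\ref{lem:strictly-inside} and~\ref{lemma:link-connectivity} and Corollary~\ref{cor:connected}.
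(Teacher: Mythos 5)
Your set-up (reducing to an interior $v$, and getting the two connected half-space sets $A=V^+$, $B=V^-$ from Corollary~\ref{cor:connected}) matches the paper, but the core of your argument is missing, and the part you defer as ``bookkeeping'' is where the proof actually lives --- and, as described, it would fail. Branch sets of a minor must be pairwise disjoint, so $A$ can be merged into at most \emph{one} branch set (likewise $B$); your stated rerouting mechanism, that any two branch sets each having a neighbour in $A$ are thereby adjacent, is false. You need up to nine adjacencies repaired (the three $w_1\sim D_i$ plus the six $Z\sim\cdot$), with only two detour regions available. Worse, a neighbour $d_i$ of $v$ may have \emph{all} of its neighbours inside the flat region (the paper calls these vertices $V_i$), in which case it has no neighbour in $A$ or $B$ at all and the ``up into $A$ or down into $B$'' repair is unavailable. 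So the construction as planned does not go through, independently of how carefully the disjointness is managed.

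The paper's proof sidesteps all of this by choosing a different apex and a different link. It takes the whole flat region $V$ (all vertices reachable from $v$ by a path realized in $P$), splits it into $V_i$ (all neighbours still in $V$) and $V_b$ (at least one neighbour off the plane), and shows every vertex of $V_b$ has neighbours in \emph{both} $V^+$ and $V^-$ (Lemmas~\ref{lem:strictly-inside} and~\ref{lemma:both-halfspaces}, plus convexity for boundary vertices). It then picks $v_b\in V_b$ adjacent to $V_i$ and applies the $3$-connectivity of the link $L_{v_b}$ (Lemma~\ref{lemma:link-connectivity}) to extract three vertices $v_b^{1,2,3}\in V_b\cap L_{v_b}$ separating $V_i\cap L_{v_b}$ from $(V^+\cup V^-)\cap L_{v_b}$. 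The three big branch sets $V_i$, $V^+$, $V^-$ are disjoint by construction, connected, and each adjacent to all three separators; $v_b$ is the apex. No detours are needed and no degree-of-$v$ case analysis arises. If you want to salvage your approach, the lesson to import is that the useful $3$-connected link is that of a vertex on the \emph{frontier} of the flat region, not the link of $v$ itself, and that the three ``large'' sides of the $K_{3,3}$ should be the three regions $V_i$, $V^+$, $V^-$ rather than pieces of $L_v$.
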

\begin{proof}
Let $V$ be the set of vertices contained in $P$ that are connected to $v$ by a path realized in $P_{\mv{n},\mv{x}(v)}$.  We distinguish two types of vertices in $V$: vertices $V_i$ whose neighbors are all in $V$ (such as $v$); and vertices $V_b$, which have at least one neighbor not in $V$. Note that vertices in $V_i$ cannot be boundary vertices. 

Let $V^+$, resp.\ $V^-$ be the vertices contained in $P^{\pm}_{\mv{n},\mv{x}(v)}$. Both are connected (Corollary~\ref{cor:connected}). Since $v$ is interior, Lemma~\ref{lem:strictly-inside} implies that none of the vertices in $V_b$ is extreme along $\mv{n}$. So all vertices in $V_b$ are connected to vertices in both $V^+$ as well as $V^-$: for interior vertices this follows from Lemma~\ref{lemma:both-halfspaces} and for boundary vertices from the realization of the boundary as a convex polyhedron. 

Pick a vertex $v_b \in V_b$ that is connected to a vertex $v_i \in V_i$. Such pair must exist because V is connected. Consider the link $L_{v_b}$ of $v_b$. It contains $v_i$ as well as a vertices in $V^\pm$. The vertices in $V_i \cap L_{v_b}$ cannot be connected by an edge to the vertices in $(V^+ \cup V^-) \cap L_{v_b}$., so there must be a set of vertices $v_b^k \in $ in $V_b \cap L_{v_b}$ separating the two sets. This set contains at least 3 vertices, because $L_{v_b}$ is 3-connected (Lemma~\ref{lemma:link-connectivity}). 

Now notice that $V_i$, $V^+$, $V^-$ are connected to all $v_b^{1,2,3}$, and $v_b$ is connected to $V_i$, $V^+$, $V^-$ as well as $v_b^{1,2,3}$. This is a $K_{3,3,1}$ (see illustration in Figure~\ref{fig:noflat}).
\end{proof}

\begin{figure}
    \centering
    \includegraphics[width=0.35\linewidth]{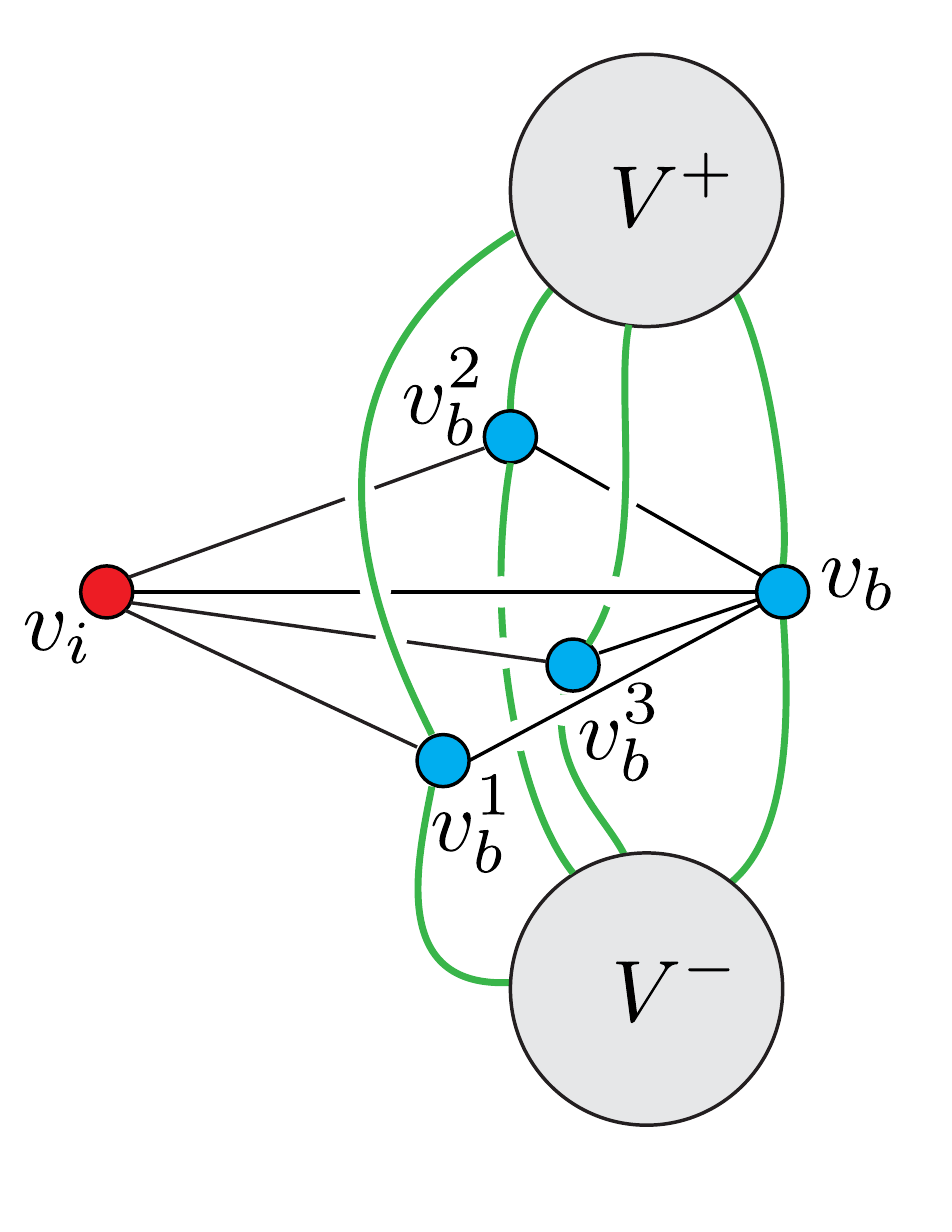}
    \caption{Illustration for Lemma~\ref{lemma:noflat}: Vertex $v_i$ and all its neighbors are in a common plane. Vertices $v_b^*$ are in the same plane but connected to vertices in both half-spaces. The vertices form a $K_{3,3,1}$.}
    \label{fig:noflat}
\end{figure}

By showing that the neighbors of a vertex are not co-planar, we have also shown that they are not co-linear or degenerate to a single point. Moreover, we have established that for any plane $P_{\mv{n},\mv{x}(v)}$, vertex $v$ will have neighbors in both half-spaces $P^{\pm}_{\mv{n},\mv{x}(v)}$. This observation allows us to sharpen Lemma~\ref{lemma:weakpath} to strictly increasing path $v_0, v_1, \ldots$ satisfying $\mv{n}\tp\mv{x}(v_i) < \mv{n}\tp\mv{x}(v_j)$ for $i < j$. 
\begin{corollary}
For any direction $\mv{n}$ every interior vertex $v$ has a strictly increasing path $v = v_0, v_1, \ldots, v_b$ that ends in a boundary vertex $v_b$. 
\label{cor:strictpath}
\end{corollary}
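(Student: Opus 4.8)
The plan is to reuse the greedy ascent from the proof of Lemma~\ref{lemma:weakpath}, but to note that the only delicate case there — an interior vertex all of whose neighbors lie in the plane $P_{\mv{n},\mv{x}(v_k)}$, forcing the detour through a connected flat region — has now been ruled out. Concretely, I would build the path inductively: set $v_0 = v$, and given $v_k$, if $v_k$ has a neighbor $v'$ with $\mv{n}\tp\mv{x}(v') > \mv{n}\tp\mv{x}(v_k)$, put $v_{k+1} = v'$; otherwise stop and set $v_b = v_k$.

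First I would check that this process never stops at an interior vertex. If $v_k$ is interior, then since $T$ has no $K_{3,3,1}$ minor, Lemma~\ref{lemma:noflat} prevents $v_k$ and all its neighbors from being coplanar; in particular they cannot all lie in $P_{\mv{n},\mv{x}(v_k)}$, so $v_k$ has a neighbor in $P^+_{\mv{n},\mv{x}(v_k)}$ or in $P^-_{\mv{n},\mv{x}(v_k)}$. In the first case the greedy step is available directly; in the second, Lemma~\ref{lemma:both-halfspaces} supplies a neighbor in $P^+_{\mv{n},\mv{x}(v_k)}$ as well, so the greedy step is again available. Hence the process can only terminate at a boundary vertex.

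Next I would argue termination. By construction $\mv{n}\tp\mv{x}(v_0) < \mv{n}\tp\mv{x}(v_1) < \cdots$ is strictly increasing, so the vertices of the path are pairwise distinct; as $T$ has finitely many vertices, the path is finite and therefore ends, necessarily in a boundary vertex $v_b$ by the previous step. This yields the asserted strictly increasing path $v = v_0, v_1, \ldots, v_b$ (and since $v_b$ has no neighbor with larger $\mv{n}$-value, it is in fact extreme along $\mv{n}$).

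I do not expect a real obstacle here: the substantive work was already carried out in Lemma~\ref{lemma:noflat}, which removes the flat-region argument that made Lemma~\ref{lemma:weakpath} technical. The one point worth stating carefully is that Lemma~\ref{lemma:noflat} only guarantees a neighbor strictly off the plane, not above it, so one genuinely invokes Lemma~\ref{lemma:both-halfspaces} to convert a strictly lower neighbor into a strictly higher one.
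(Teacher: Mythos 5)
Your proposal is correct and follows essentially the same route as the paper: a greedy ascent in which Lemma~\ref{lemma:noflat} rules out the all-coplanar case at an interior vertex and Lemma~\ref{lemma:both-halfspaces} then supplies a strictly higher neighbor, with termination by finiteness. The paper states this in one sentence; your version merely spells out the same steps (including the correct observation that both lemmas are needed, not just Lemma~\ref{lemma:noflat}).
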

\begin{proof}
The proof is analogous to Lemma~\ref{lemma:weakpath} except that Lemmas~\ref{lemma:both-halfspaces} and \ref{lemma:noflat} now guarantee that every interior vertex $v_k$ has a neighboring vertex $v_{k+1}$ satisfying 
$\mv{n}\tp\mv{x}(v_{k+1}) > \mv{n}\tp\mv{x}(v_{k+1})$.
% By Lemmas~\ref{lemma:both-halfspaces} and \ref{lemma:noflat} $v_0$ has a neighbor $v_1$ in $P^+_{\mv{n},\mv{x}(v_0)}$, which by definition satisfies
% \begin{equation}
% \mv{n}\tp(\mv{x}(v_1)-\mv{x}(v_0)) > 0 \Longleftrightarrow \mv{n}\tp\mv{x}(v_1) > \mv{n}\tp\mv{x}(v_0).    
% \end{equation}
% If $v_1$ is on the boundary we are done. Otherwise we select $v_2$ from $P^+_{\mv{n},\mv{x}(v_1)}$ and repeat. By definition of the increasing path it cannot have cycles, so the process has to stop. But it cannot stop in an interior vertex. So the last vertex in the sequence is a boundary vertex. 
\end{proof}

\section{Local and global injectivity}

The fact that interior vertices are strictly in the interior of the boundary polyhedron (Lemma~\ref{lem:strictly-inside}) establishes the injectivity for tetrahedra incident on boundary triangles:
\begin{corollary}
Consider a boundary face $f$, its incident tetrahedron $t$ and the vertex $v$ in $t$ not in $f$. All elements of $t$ have positive (signed) measure.
\label{cor:nodegonboundary}
\end{corollary}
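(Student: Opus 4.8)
\emph{Proof proposal.} The plan is to reduce the statement to Lemma~\ref{lem:strictly-inside} together with the strict convexity of the boundary polyhedron. I would first dispose of the degenerate case in which $T$ consists of a single tetrahedron: then its four vertices are in strictly convex position by hypothesis, so $t$ is non-degenerate, and after fixing the orientation so that boundary faces carry outward normals it has positive signed volume; all six edges and all four triangular faces are then non-degenerate as well.

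Assume henceforth that $T$ has more than one cell, and write $f=\{a,b,c\}$, $t=\{a,b,c,v\}$. The key intermediate claim is that the apex $v$ is an \emph{interior} vertex. Since $t$ is a tetrahedron, the edges $va,vb,vc$ and the triangles $vab,vac,vbc$ are all simplices of $t$. If $v$ were a boundary vertex, each of $vab,vac,vbc$ would be incident on three boundary vertices, hence a boundary triangle by the hypothesis of Theorem~1. Then all four faces of $t$ would lie on $\partial T$; but $\partial T$ is a connected triangulated $2$-sphere (the boundary is simply connected), so these four triangles would already constitute all of $\partial T$, forcing $T=t$, contrary to assumption. Hence $v$ is interior.

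Now I would invoke Lemma~\ref{lem:strictly-inside}: the interior vertex $v$ is not realized on the boundary, so $\mv{x}(v)$ lies strictly in the interior of the convex boundary polyhedron. Let $P$ be the supporting plane of the boundary face $f$. By strict convexity every vertex of $T$ other than $a,b,c$ lies strictly on the interior side of $P$; in particular $\mv{x}(v)\notin P$, so $v$ is not coplanar with $f$. Consequently $t$ has nonzero volume, and with the orientation induced by the outward-oriented boundary this volume is positive. Since $a,b,c$ are affinely independent (a face of a strictly convex polyhedron is a non-degenerate triangle) and $v$ lies off their plane, the remaining faces $abv,acv,bcv$ are non-degenerate triangles and all six edges have positive length; thus every element of $t$ has positive (signed) measure.

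I do not expect a real obstacle here. The only point that needs a little care is the case distinction showing that $v$ is interior (equivalently, that $t$ is neither an ``ear'' nor the whole mesh), which uses the ``no interior triangle on three boundary vertices'' hypothesis in exactly the same way as the discussion preceding Lemma~\ref{lemma:link-connectivity}; everything else is immediate from Lemma~\ref{lem:strictly-inside} and strict convexity.
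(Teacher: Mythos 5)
Your proposal is correct and follows essentially the same route as the paper: invoke Lemma~\ref{lem:strictly-inside} to place $v$ strictly inside the boundary polyhedron, conclude positive volume for $t$, and note that any degenerate edge or face would force zero volume. The only difference is that you explicitly verify the apex $v$ is an interior vertex (via the no-interior-triangle-on-three-boundary-vertices hypothesis), a step the paper leaves to its earlier no-ears discussion.
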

\begin{proof}
Since $v$ is strictly inside the boundary polyhedron, $t$ has positive volume. If any of the triangles or edges had zero area or lengths, the volume of the tetrahedron would be zero, so they are all strictly positive.
\end{proof}

Let us now consider two tetrahedra incident on an interior triangle. We want to show that if one of the tetrahedra is realized injectively then the other one must be as well. This local consistency we will then imply global injectivity. 

We need the following observation about \emph{planar triangulations}, i.e., a planar graph whose faces all have degree 3, except possibly the outer face. 

\begin{proposition}
In a planar triangulation without chords, the graph induced by the interior vertices is connected. Contracting the interior vertices  results in a triangulation of the boundary polygon with a single interior vertex connected to all boundary vertices. 
\label{prop:interior}
\end{proposition}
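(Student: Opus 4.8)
The plan is to prove the two assertions in turn, with the second following from the first. Write $D$ for the triangulated closed disk underlying the triangulation, $\partial D$ for its boundary polygon, and $I$ for the set of interior vertices. Two consequences of chord-freeness are used throughout: (i) the only edges joining two boundary vertices are polygon edges; and (ii) unless $D$ is a single triangle, every boundary vertex has a neighbor in $I$ — because the link of a boundary vertex $w$ is an arc whose non-endpoint vertices, if they lay on $\partial D$, would be non-polygon-neighbors of $w$ and hence give chords, and a length-zero link would make the two polygon neighbors of $w$ adjacent, again a chord unless $D$ is one triangle. The degenerate cases ($D$ a single triangle, or $|I|\le 1$) make both assertions immediate and should be dispatched separately at the start.

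For the first assertion I would argue by contradiction with a topological-frontier argument. Suppose $G[I]$ is disconnected; let $A$ be one connected component and $B=I\setminus A\neq\emptyset$. Set $R_A=\bigcup_{a\in A}\overline{\mathrm{st}}(a)$, the union of the closed stars of the vertices of $A$. This is a closed subcomplex of $D$; it is connected because $G[A]$ is connected (stars of adjacent vertices share an edge); and it is a proper subset, since a vertex of $B$ lying in the closed star of some $a\in A$ would be adjacent to $a$ and hence itself in $A$. A closed, nonempty, proper subset of the connected space $D$ has nonempty topological frontier $\Gamma$, and a short local analysis shows that $\Gamma$ is a subcomplex of the $1$-skeleton of $D$ containing at least one \emph{frontier edge} $e$: an edge of $D$ such that the two triangles incident to $e$ — or the single triangle, if $e$ is a polygon edge — are not both in $R_A$. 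Now $e$ cannot be a polygon edge, because as soon as any simplex on $e$ meets $A$, the unique triangle on $e$ lies in $R_A$, putting the relative interior of $e$ into the interior of $R_A$; and each endpoint $w$ of $e$ must lie on $\partial D$, because an interior vertex $w\in R_A$ is forced to lie in $A$, hence has its whole closed star inside $R_A$, hence is interior to $R_A$ rather than on $\Gamma$. So $e$ joins two boundary vertices and is not a polygon edge: a chord, contradicting the hypothesis. Therefore $G[I]$ is connected.

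Granting this, the second assertion follows by collapsing $I$. Since $G[I]$ is connected it may be contracted to a single vertex $v^*$; call the resulting graph $\bar G$, on vertex set $\partial D\cup\{v^*\}$. Choose a regular neighborhood $N\subset\mathrm{int}(D)$ of the subcomplex carried by $I$: it is a disk containing every vertex and edge of $I$, and after an isotopy every other edge of $G$ lies outside $\mathrm{int}(N)$, with each edge from $\partial D$ to $I$ crossing $\partial N$ exactly once. Collapsing $N$ to the point $v^*$ turns $D$ into a disk with the same boundary polygon and realizes $\bar G$ as a plane graph whose edges are the polygon edges together with a spoke $v^*w$ for each boundary vertex $w$ with a neighbor in $I$ — that is, every boundary vertex, by (ii). By planarity the spokes appear around $v^*$ in the cyclic order of the polygon, so consecutive spokes $v^*w_i$, $v^*w_{i+1}$ have cyclically consecutive endpoints; the region they cut off, closed by the boundary arc from $w_i$ to $w_{i+1}$, is then bounded by exactly the three edges $v^*w_i$, $w_iw_{i+1}$, $v^*w_{i+1}$, hence is a triangular face. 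These triangles and the outer polygonal face are all the faces, so $\bar G$ is a triangulation of the boundary polygon with the single interior vertex $v^*$ joined to all boundary vertices.

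The combinatorial bookkeeping of the collapsing step is routine and the spoke-ordering claim is just planarity, so the crux is the first assertion, and within it the step that needs care is showing that a disconnection of $G[I]$ forces a chord. The delicate point is verifying that the frontier of $R_A$ really contains an honest edge of $D$ whose two endpoints lie on $\partial D$ — i.e., ruling out that the interface between $R_A$ and its complement stays hidden in the interior of $D$ — and this is exactly what chord-freeness forbids, since such an interface would produce precisely an interior shortcut between two boundary vertices. An induction on $|I|$ (peeling one interior vertex or contracting one interior edge at a time, checking that chord-freeness is preserved) is a plausible alternative route, but managing the re-triangulation without creating chords looks at least as delicate as the frontier argument above.
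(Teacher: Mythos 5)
Your proof is correct, but your argument for the key first assertion is genuinely different from the paper's. The paper invokes the cited fact that a chord-free triangulation is $3$-connected, takes two interior vertices $a,b$ in different components of the interior graph, obtains three internally disjoint $a$--$b$ paths each forced through a boundary vertex, and adds an outer apex joined to those three boundary vertices to produce a $K_{3,3}$ subdivision inside a planar graph --- a Kuratowski-type contradiction. You instead run a direct topological frontier argument: the union of closed stars of one component of the interior graph is a nonempty, closed, proper subcomplex of the disk, its frontier must contain a frontier edge, and chord-freeness rules out that edge being a polygon edge while the component structure rules out its endpoints being interior --- so the edge is exhibited as a chord. Your route is more elementary and self-contained (no appeal to $3$-connectivity of chord-free triangulations or to planarity obstructions) and it localizes the obstruction by actually producing the chord, at the price of the careful local analysis of what the frontier of a union of stars looks like; the paper's route is shorter but leans on two external results. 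The second assertion (contract the connected interior; no chords means no ears, so every boundary vertex has an interior neighbor and the quotient is a wheel) is handled essentially identically in both, with your regular-neighborhood bookkeeping being a more explicit version of the paper's one-line contraction step.
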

\begin{proof}
We assume the planar triangulation has at least two interior vertices. 
Assume that $a$ and $b$ are interior vertices not connected by a path of interior edges. Since there are no chords, the triangulation is 3-connected~\cite[3.2]{LAUMOND199087}, so $a$ and $b$ are connected to three boundary vertices $v_b^k, k = \{1,2,3\}$. Add vertex $e$ outside the boundary and connect it to $v_b^k$ with non-crossing paths. This creates a $K_{3,3}$, which is impossible, because the triangulation (including the paths to the exterior vertex) is planar. 

Since the interior vertices are connected they can be contracted to a single vertex without altering the boundary. Each boundary vertex is connected to a least one interior vertex, because there are no chords, so there are no ears. 
\end{proof}

% \begin{corollary}
% Contracting the interior vertices in a planar triangulation without chords results in a triangulation of the boundary polygon with a single interior vertex that is connected to all boundary vertices.
% \label{cor:interior}
% \end{corollary}
% \begin{proof}
% 
% \end{proof}

\begin{figure}
    \centering
    \includegraphics[width = \linewidth]{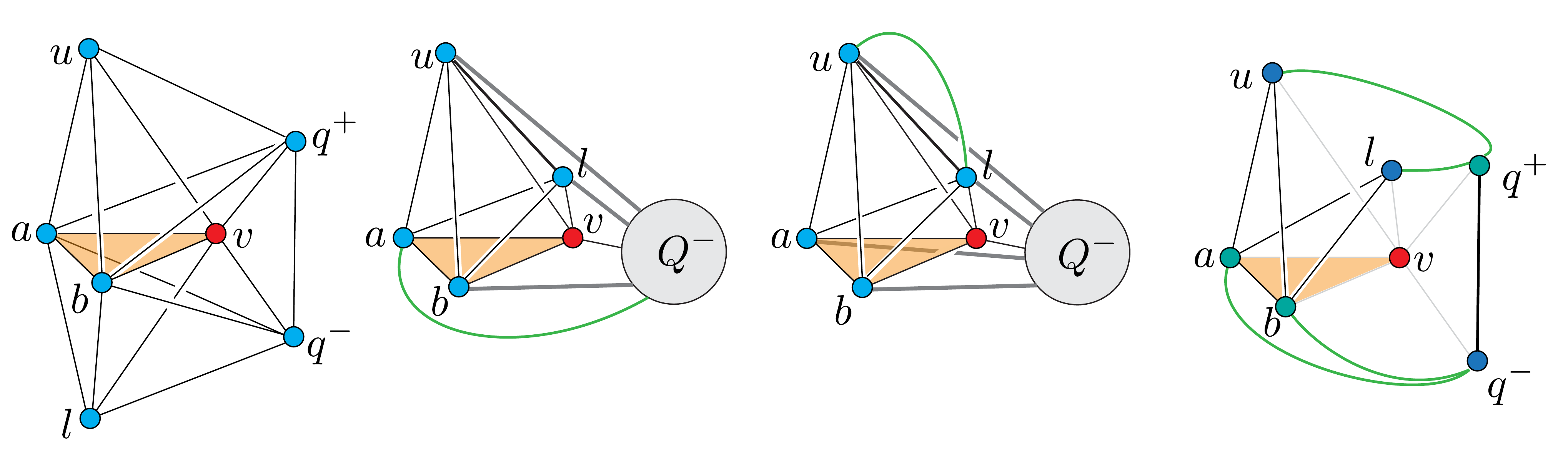}
    \caption{Illustration for Lemma~\ref{lemma:intface}: The interior face $f = (a,b,v)$ (in orange) is incident on two tetrahedra spanned by vertices $u$ and $l$. The leftmost illustration shows a proper realization. The other illustrations correspond the cases 1,2, and 3 (from left to right). }
    %If $u$ and $l$ are realized on the same side of $f$ they are connected by a path and there most be a vertex $q^-$ in the link of $v$ realized on the side of $f$ not containing $u,l$. If the path between $u$ and $l$ has no intersection with the link of $v$ then $a,b,u,l,v,qˆ-$ form a #K_6 (middle images). If the path between $u,l$ intersects the link, there is vertex $q^+$ realized on the same side of $f$ as   is a $K_6$ outside the link creates a $K_6$ (right).}
    \label{fig:intface}
\end{figure}

\begin{lemma}
\label{lemma:intface}
Let $f$ be an interior triangle with incident tetrahedra $t_u,t_l$ and $u,l$ the vertices in $t_u,t_l$ not in $f$. Let $P_{\mv{n},\mv{x}(v)}$ be the plane through $f$ and assume $\mv{x}(u) \in P^+_{\mv{n},\mv{x}(v)}$. If $\mv{x}(l) \not\in P^-_{\mv{n},\mv{x}(v)}$ then $T$ contains $K_6$ or $K_{3,3,1}$ as a minor. 
\end{lemma}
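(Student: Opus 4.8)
The strategy is to run an argument parallel to the proof of Lemma~\ref{lemma:noflat}, now anchored at the interior triangle $f=(a,b,v)$ rather than at a single flat vertex: the goal is to show that folding over along $f$ forces one of the two excluded minors. Write $h(w):=\mv{n}\tp(\mv{x}(w)-\mv{x}(v))$ for the signed height of a vertex $w$, so that $h(a)=h(b)=h(v)=0$, $h(u)>0$, and $h(l)\ge 0$ by hypothesis. First I would reduce to the case that $v$ is interior: since $f$ is an interior triangle it is, by the standing assumption of Theorem~1, incident on at most two boundary vertices, so at least one of $a,b,v$ is interior, and as the plane $P:=P_{\mv{n},\mv{x}(v)}$ passes through all three we may relabel so this vertex is $v$. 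By Lemma~\ref{lem:strictly-inside}, $v$ is strictly inside the boundary polyhedron, and since $u$ is a neighbor of $v$ with $h(u)>0$, Lemma~\ref{lemma:both-halfspaces} yields a neighbor $w^{-}$ of $v$ with $h(w^{-})<0$; thus $V^{+}:=\{w:h(w)>0\}$ and $V^{-}:=\{w:h(w)<0\}$ are both nonempty, and each is connected by Corollary~\ref{cor:connected}. Finally pass to the link $L_v$, which by Lemma~\ref{lemma:link-connectivity} is a $3$-connected planar triangulation — a triangulated $2$-sphere, as $v$ is interior — in which the edge $ab$ is flanked by exactly the two faces $abu$, $abl$ (corresponding to $t_u,t_l$); so $u$ and $l$ are the two vertices completing $ab$ to a triangle in $L_v$, and since $abu$, $abl$ are themselves faces, neither triple separates $L_v$.

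Suppose $h(l)>0$, so both $t_u$ and $t_l$ protrude to the open side $P^{+}_{\mv{n},\mv{x}(v)}$ and the realization folds over along $f$. If $u\not\sim l$ in $L_v$, then $a,u,b,l$ is a $4$-cycle of $L_v$ bounding the union of the faces $abu$ and $abl$, and its complementary disk $D\subseteq L_v$ contains $w^{-}$, hence all of $V^{-}$; moreover $D$ has no chords — its only non-adjacent boundary pair $\{u,l\}$ is, by assumption, a non-edge — so by Proposition~\ref{prop:interior} the vertices interior to $D$ form a connected set. Since $\deg_{L_v}a=3$ would make the third face at $a$ be $aul$ and hence force $u\sim l$, we have $\deg_{L_v}a\ge 4$ and likewise $\deg_{L_v}b\ge 4$, while $u,l$ each have a neighbor off the $4$-cycle; hence every vertex of the $4$-cycle has a neighbor interior to $D$, and contracting the interior of $D$ produces branch sets $\{a\},\{b\},\{u\},\{l\},\{v\}$ and a sixth branch set $W\subseteq D$ adjacent to all of them — i.e.\ a model of $K_6$ with the single pair $\{u,l\}$ possibly missing. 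One then either restores that adjacency using a $u$--$l$ path running through $V^{+}$ inside $D$ (obtaining $K_6$), or, when no such path can be kept disjoint from $W$, repartitions the interior of $D$ into two connected pieces — one carrying $V^{-}$, one meeting all of $u,l,v$ — and reads off $K_{3,3,1}$ with $v$ dominating. The sub-case $u\sim l$ is similar but easier, $\{a,b,u,l,v\}$ being a $K_5$ to which one adjoins a contracted connected set below $P$ (using that $a,b$ are not extreme along $\mv n$, so have neighbors below $P$ by Lemma~\ref{lemma:both-halfspaces} or by convexity of the boundary).

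Suppose instead $h(l)=0$. Then $a,b,v,l$ are coplanar, $abl$ is a face of $L_v$ on the equator $P$, and $v$ lies in the flat region consisting of all vertices in $P$ joined to $v$ by a path realized in $P$; this region already has neighbors off $P$ on both sides ($u$ above, $w^{-}$ below). Re-running the argument of Lemma~\ref{lemma:noflat} with this region — using the $3$-connectivity of $L_v$ to extract a $3$-element subset of $L_v$ separating the above-$P$ neighbors of $v$ from the below-$P$ ones, and contracting the connected sets $V^{+}$, $V^{-}$ and a connected piece of the flat region — produces the three-plus-three-plus-one structure of $K_{3,3,1}$ with $v$ dominating; the one point not literally contained in Lemma~\ref{lemma:noflat} is that the flat region need not contain a vertex all of whose neighbors stay in $P$, which does no harm since $v$ itself fills that role, being adjacent to $a,b,l\in P$, to $u$ above $P$ and to $w^{-}$ below $P$. (If $l$ happens to have all its neighbors in $P$, Lemma~\ref{lemma:noflat} applies to $l$ verbatim.)

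The conceptual content — foldover along $f$ contradicts linkless embeddability — is inherited from the earlier lemmas; the real work, and where I expect the main difficulty, is the bookkeeping of the contractions. One must (i) isolate the few degenerate configurations in which the clean separating $4$-cycle $a,u,b,l$ or the flat region breaks down — precisely the three cases in Figure~\ref{fig:intface} ($\mv{x}(l)$ strictly above $P$; $\mv{x}(l)\in P$ with $\mv{x}(l)\ne\mv{x}(v)$; $\mv{x}(l)=\mv{x}(v)$) — and (ii) check in each that the contracted branch sets, above all the below-$P$ set $W$, can be made to touch every vertex a $K_6$- or a $K_{3,3,1}$-model requires, invoking $3$-connectivity of $L_v$ (Lemma~\ref{lemma:link-connectivity}), connectivity of open half-spaces (Corollary~\ref{cor:connected}), and the no-chord connectivity of Proposition~\ref{prop:interior} at each step. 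Deciding exactly when the missing adjacency $u\sim l$ can be supplied — so that a full $K_6$ appears — and when one must fall back on $K_{3,3,1}$ is the delicate part.
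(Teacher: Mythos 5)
Your setup matches the paper's --- reduce to an interior vertex $v$ of $f$, view $L_v$ as a triangulated sphere in which the cycle $(a,u,b,l)$ bounds a chordless disk, and contract its interior via Proposition~\ref{prop:interior} into a sixth branch set adjacent to $a,b,u,l,v$ --- but the proposal stalls exactly at the crux, and you flag this yourself. The missing idea is the paper's path $C$: since $u$ and $l$ (or, when $h(l)=0$, a neighbor of $l$ above the plane) lie in the open upper half-space, Corollary~\ref{cor:connected} yields a $u$--$l$ path $C$ in the whole graph $T$ with $C\setminus\{l\}$ realized in $P^+_{\mv{n},\mv{x}(v)}$; your path is required to run ``inside $D$,'' i.e.\ inside the link, where no such path need exist. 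The paper's trichotomy is then: the edge $ul$ is present; $C$ meets $L_v$ only in its endpoints, so $C$ supplies the missing adjacency and a $K_6$ results; or $C$ meets $L_v$ in a vertex $q^+\in P^+$, in which case one abandons $K_6$ and exhibits $K_{3,3,1}$ with parts $\{u,l,q^-\}$, $\{a,b,q^+\}$ and apex $v$, the cross-adjacencies coming from $C$, from decreasing paths out of $a$, $b$, $q^-$ joined inside $V^-$, and from the connectivity of the chordless triangulation inside $(a,u,b,l)$. Your fallback (``repartition the interior of $D$ into two connected pieces'') does not specify a $K_{3,3,1}$ model and is not a substitute for this.

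Two further steps would fail as written. For $h(l)=0$ you propose to re-run Lemma~\ref{lemma:noflat} with $v$ in the role of the flat vertex, but $v$ has neighbors off the plane (namely $u$ and $w^-$), so it lies in that lemma's $V_b$ rather than $V_i$, and the argument there genuinely needs a nonempty $V_i$ to anchor the separating triple in a link; the paper instead folds $h(l)=0$ into the main case by producing a neighbor of $l$ in $P^+$ (Lemmas~\ref{lemma:noflat} and \ref{lemma:both-halfspaces}) and running the same three-case path argument. And in the subcase $u\sim l$, your contracted set ``below $P$'' reaches $a$, $b$, $v$ but not obviously $u$ or $l$, which need not have any neighbor below the plane; the paper instead contracts the region of $L_v$ bounded by $(a,l,u)$ or $(b,u,l)$ that contains $q^-$ --- adjacent to its three boundary vertices and to $v$ by Proposition~\ref{prop:interior} --- and reaches the fourth vertex by a decreasing-path detour through $V^-$.
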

\begin{proof}
Assume $\mv{x}(l)$ is contained $P_{\mv{n},\mv{x}(v)} \cup P_{\mv{n},\mv{x}(v)}^+$. As $f$ is an interior face it has at least one interior vertex. Let $v$ be the interior vertex of $f$ and $a,b$ the other two. The link $L_v$ of $v$ is a polyhedral graph. It contains the triangles $(a,b,u)$ and $(l,b,a)$. The remainder of $L_v$ is a triangulation bounded by the cycle $(a,u,b,l)$. For illustrations see Figure~\ref{fig:intface}.

\smallskip\noindent\emph{The vertex $q^-$}:
As $v$ has neighbors realized in $P_{\mv{n},\mv{x}(v)}^+$ it must have at least one neighbor in $P_{\mv{n},\mv{x}(v)}^-$ (by Lemma~\ref{lemma:both-halfspaces}). Every neighbor of $v$ is in $L_v$, but none of $a,u,b,l$ are realized in $P_{\mv{n},\mv{x}(v)}^-$. So there must be at least one other vertex in $L_v$. Among the vertices $q \in L_v$ realized in $P_{\mv{n},\mv{x}(v)}^-$ we pick one vertex and call it $q^-$. Note that all vertices $q \in L_v$ are connected to $v$.
%Constructing strictly decreasing path from $a,b,q^-$ shows that $q^-$ is connected to $a$ and $b$ by path not containing any of $l,u,v$. The paths 

\smallskip\noindent\emph{The path $C$}:
If $l$ is realized in $P^+_{\mv{n},\mv{x}(v)}$ there exists a path $C$ from $l$ to $u$ realized in $P^+_{\mv{n},\mv{x}(v)}$ by Lemma~\ref{cor:connected}. If $l$ is realized in $P_{\mv{n},\mv{x}(v)}$ it has a neighbor realized in $P^+_{\mv{n},\mv{x}(v)}$ by Lemmas~\ref{lemma:noflat} and \ref{lemma:both-halfspaces} (and because $f$ is not on the boundary). In this case there exists a path $C$ from $l$ to $u$ such that $C\setminus \{l\}$ is realized in $P_{\mv{n},c}^+$. We now consider three cases based on the intersection of $C$ and $L_v$.

\smallskip\noindent\emph{Case 1}: Vertices $u$ and $l$ are connected by an edge. This creates the two cycles $(a,l,u)$ and $(b,u,l)$ in $L_v$. The triangulation of one them contains $q^-$. Assume this is $b,u,l$ -- the other case is analogous. The triangulation of $(b,u,l)$ is connected by Prop.~\ref{prop:interior} and we contract it to $Q^-$, which is connected to $u,l$, and $b$. A path to $a$ is constructed from decreasing paths starting in $a$ and any vertex in $Q^-$. All vertices in this path, except for $a$ are realized in $P^-_{\mv{n},\mv{x}(v)}$, so the path cannot contain any of $b,u,l$, or $v$.
This means $a,u,b,l$ together with $v$ and $Q^-$ form a $K_6$.

\smallskip\noindent\emph{Case 2}: The edge $(u,l)$ is not present and $C \cap L_v = \{u,l\}$, i.e., the path $C$ has no intersection with $L_v$ except for the endpoints. In this case contract the interior vertices of the triangulation in the cycle $(a,u,b,l)$. The triangulation has no chords, because $(u,l)$ is not present, so it is connected (Prop.~\ref{prop:interior}) and we contract it into $Q^-$. All boundary vertices $a,u,b,l$ of the triangulation in the cycle are connected to $Q^-$ (Prop.~\ref{prop:interior}).
As above, $a,u,b,l$ together with $v$ and $Q^-$ form a $K_6$.

\smallskip\noindent\emph{Case 3}: The edge $(u,l)$ is not present and $C$ intersects $L_v$ in at least one vertex $q^+$. The name $q^+$ indicates that it is realized in  $P_{\mv{n},c}^+$ because $C\setminus l$ is. This implies that $q^+$ is distinct from $q^-$. Because $q^+ \in C$, it is connected to both $u$ and $l$. We construct decreasing paths from $a,b$, and $q^-$., establishing a path from $q^-$ to $a$ and to $b$. Each of $a$ and $b$ has a neighbor in $P_{\mv{n},\mv{x}(v)}$, so $a$ is not in the path from $q^-$ to $b$ and vice versa. 
Lastly, $q^+$ and $q^-$ are connected in $L_v\ \setminus\{a,u,b,l\}$, because the triangulation inside $(a,u,b,l)$ has no chords and is connected (Prop.~\ref{prop:interior}).
Now $\{u,l,q^-\}$ and $\{a,b,q^+\}$ form a $K_{3,3}$. Since all of the 6 vertices are in $L_v$, together with $v$ they form a $K_{3,3,1}$. 
\end{proof}

This establishes that if one of the two tetrahedra incident on an interior face is non-degenerate and correctly oriented, the other one is as well. Since the tetrahedra incident on the boundary are non-degenerate and correctly oriented (Corollary~\ref{cor:nodegonboundary}) and the dual graph of $T$ is connected it follows that all tetrahedra are non-degenerate and correctly oriented. 

The global injectivity (i.e.\ every point in the interior of the boundary is inside exactly one simplex) can be established by a homotopy argument (similar to the 2D case~\cite{Geelen,Spielman}): for any point $\mv{q}$ in the interior of the boundary consider a half-line $l$, originating at $\mv{q}$, not intersecting any vertex or edge. Such line exists because the shadows of the vertices and edges on the sphere of directions consists of finitely many at most one-dimensional subsets. Start with any point $\mv{x}$ on $l$ outside the boundary and move towards $\mv{q}$. The number of tetrahedra that contain $\mv{x}$ can only change if $\mv{x}$ crosses a face. As $\mv{x}$ crosses the boundary face, the number of tetrahedra containing $\mv{x}$ changes from $0$ to $1$. Note that no interior vertex, edge, or face can intersect any boundary face (by Corollary~\ref{cor:nodegonboundary}). In the interior, the number of tetrahedra containing $\mv{x}$ stays constant as $\mv{x}$ crosses interior faces by Lemma~\ref{lemma:intface}. Note that this is the case even if $l$ intersects more than one face in the same point. However, the fact that the number of tetrahedra containing any interior point is always one rules out the possibility that two faces intersect, because in the vicinity of the intersection more than one tetrahedron would contain the point. 

\section{Discussion}
\label{sec:discussion}

% \todo{Random bits and pieces for now.}
Whereas both $K_5$ and $K_{3,3}$ cannot be embedded in the plane, a tetrahedral mesh with $K_6$ as the underlying graph may well be embedded in $\R^3$. Whether a realization based on convex combinations is embedded depends on the weights. We have not been able to create a similar situation for a tetrahedral mesh with only a $K_{3,3,1}$-minor (and no $K_6$-minor). Or, in other words, we were unable to generate an example that would demonstrate the necessity of excluding $K_{3,3,1}$-minors. In fact, every graph without a $K_6$-minor turned out to be embedded by a any convex combination map. This leads to the question whether excluding $K_{3,3,1}$ is necessary for the proof.

There are certain bounds on the number $k$-cliques in a graph without $K_t$-minors~\cite{Wood:2016}. For the maximal number of 4-cliques in a graph on $n$ vertices without a $K_6$ minor we find $4n-15$. Since every tetrahedron is a 4-clique a tetrahedral mesh on $n$ vertices with more than $4n-15$ tetrahedra has to contain a $K_6$-minor. Consequently, only for tetrahedral meshes with few elements there is hope that convex combination mappings are guaranteed to work. This may be considered in contrast to the speculation by Chilakamarri et al.~\cite{chilakamarri1995three} that `high' connectivity is necessary for an extension of Tutte's theorem. If this is interpreted as the graph containing many edges, we find that rather the opposite is the case for tetrahedral meshes. 

Moreover, already the bound on the number of 4-cliques suggests that most tetrahedral meshes used in practice are guaranteed to contain a $K_6$ and, based on our and others' experiments, unlikely embedded by 'random' convex combination maps. One route for future investigations may be the reduction of the number of tetrahedra using bistellar flips, or the reduction of the graph by edge contractions.

\bibliographystyle{amsplain}
\bibliography{tutte3d}

\providecommand{\bysame}{\leavevmode\hbox to3em{\hrulefill}\thinspace}
\providecommand{\MR}{\relax\ifhmode\unskip\space\fi MR }
% \MRhref is called by the amsart/book/proc definition of \MR.
\providecommand{\MRhref}[2]{%
  \href{http://www.ams.org/mathscinet-getitem?mr=#1}{#2}
}
\providecommand{\href}[2]{#2}
\begin{thebibliography}{10}

\bibitem{Aigerman:2015:OTE}
Noam Aigerman and Yaron Lipman, \emph{Orbifold tutte embeddings},  \textbf{34}
  (2015), no.~6, 190:1--190:12.

\bibitem{Aigerman:2016:HOT}
\bysame, \emph{Hyperbolic orbifold tutte embeddings}, ACM Trans. Graph.
  \textbf{35} (2016), no.~6, 217:1--217:14.

\bibitem{Campen}
Marcel Campen, Cl\'{a}udio~T. Silva, and Denis Zorin, \emph{Bijective maps from
  simplicial foliations}, ACM Trans. Graph. \textbf{35} (2016), no.~4,
  74:1--75:15.

\bibitem{chilakamarri1995three}
Kiran Chilakamarri, Nathaniel Dean, and Michael Littman,
  \emph{Three-dimensional tutte embedding}, Congressus Numerantium \textbf{107}
  (1995), 129--140.

\bibitem{Fary}
Istv\'{a}n F\'{a}ry, \emph{On straight line representation of planar graphs},
  Acta Universitatis Szegediensis. Sectio Scientiarum Mathematicarum
  \textbf{11} (1948), 229--233.

\bibitem{Floater:2003:PLM}
Michael~S. Floater, \emph{One-to-one piecewise linear mappings over
  triangulations}, Math. Comput. \textbf{72} (2003), no.~242, 685–696.

\bibitem{Floater:2006:CCM}
Michael~S. Floater and Val{\'{e}}rie Pham-Trong, \emph{Convex combination maps
  over triangulations, tilings, and tetrahedral meshes}, Advances in
  Computational Mathematics \textbf{25} (2006), 347--356.

\bibitem{Geelen}
Jim Geelen, \emph{On {Tutte's} paper {"How to draw a graph"}},
  \href{https://www.math.uwaterloo.ca/\~jfgeelen/Publications/tutte.pdf}{Accessible
  online}, 2012, University of Waterloo.

\bibitem{GORTLER200683}
Steven~J. Gortler, Craig Gotsman, and Dylan Thurston, \emph{Discrete one-forms
  on meshes and applications to 3d mesh parameterization}, Computer Aided
  Geometric Design \textbf{23} (2006), no.~2, 83--112.

\bibitem{Laugesen}
Richard~Snyder Laugesen, \emph{Injectivity can fail for higher-dimensional
  harmonic extensions}, Complex Variables, Theory and Application: An
  International Journal \textbf{28} (1996), no.~4, 357--369.

\bibitem{LAUMOND199087}
Jean-Paul Laumond, \emph{Connectivity of plane triangulations}, Information
  Processing Letters \textbf{34} (1990), no.~2, 87--96.

\bibitem{Melas}
Antonios~D. Melas, \emph{An example of a harmonic map between euclidean balls},
  Proceedings of the American Mathematical Society \textbf{117} (1993), no.~3,
  857--859.

\bibitem{Robertson:1995}
Neil Robertson, Paul Seymour, and Robin Thomas, \emph{Sachs' linkless embedding
  conjecture}, J. Comb. Theory Ser. B \textbf{64} (1995), no.~2, 185–227.

\bibitem{Sachs}
Horst Sachs, \emph{On a spatial analogue of {Kuratowski's} theorem on planar
  graphs --- an open problem}, Graph Theory (Berlin, Heidelberg)
  (M.~Borowiecki, John~W. Kennedy, and Maciej~M. Sys{\l}o, eds.), Springer
  Berlin Heidelberg, 1983, pp.~230--241.

\bibitem{Spielman}
Daniel~A. Spielman, \emph{{Tutte’s Theorem: How to draw a graph}},
  \href{https://www.cs.yale.edu/homes/spielman/561/lect15-18.pdf}{Accessible
  online}, 2018, Lecture notes for Spectral Graph Theory, Yale University CPSC
  662/AMTH 561.

\bibitem{Steinitz}
Ernst Steinitz, \emph{{Polyeder und Raumteilungen}}, {Enzyklop{\"{a}}die der
  mathematischen Wissenschaften. Dritter Band: Geometrie}, 1922, {III.1.2.,
  Heft 9, Kapitel 3 A B 12}, pp.~1--139.

\bibitem{Tutte}
William~T. Tutte, \emph{How to draw a graph}, Proceedings of the London
  Mathematical Society \textbf{s3-13} (1963), no.~1, 743--767.

\bibitem{Wood:2016}
David~R. Wood, \emph{Cliques in graphs excluding a complete graph minor}, The
  Electronic Journal of Combinatorics \textbf{23} (2016), no.~4, P3.18.

\end{thebibliography}

\end{document}